\def\doi{7 (4:03) 2011}
\definecolor{my1}{cmyk}{0,.6,0,0}
\definecolor{my2}{cmyk}{.3,.0,.0,.0}
\newcommand\kloop{$k$-loop\xspace}
\newcommand\kloops{$k$-loops\xspace}
\newcommand\ktame{$k$-\tame}
\newcommand\kltame{$(k,l)$-\tame}
\newcommand\tame{tame\xspace}
\newcommand\testable[1]{$#1$-locally testable\xspace}
\newcommand\tameness{tameness\xspace}
\newcommand\subtree[2]{\ensuremath{#1|_{#2}}\xspace}
\newcommand\context[3]{\ensuremath{#1[#2,#3]}\xspace}
\newcommand\type[1]{$#1$-type\xspace}
\newcommand\types[1]{$#1$-types\xspace}
\newcommand\ktype{\type{k}}
\newcommand\ktypes{\types{k}}
\newcommand\sameblockequiv[1]{\ensuremath{\simeq_{#1}}}
\newcommand\sameblocks[3]{\ensuremath{#1 \simeq_{#3} #2}\xspace}
\newcommand\sameblocksk[2]{\sameblocks{#1}{#2}{k}\xspace}
\newcommand\lessblocks[3]{\ensuremath{#1 \preccurlyeq_{#3} #2}\xspace}
\newcommand\lessblocksk[2]{\lessblocks{#1}{#2}{k}\xspace}
\theoremstyle{plain}\newtheorem{claim}[thm]{Claim}
\title{A decidable characterization of locally testable tree languages}
\author[T.~Place]{Thomas Place}
\author[L.~Segoufin]{Luc Segoufin}
\address{INRIA and ENS Cachan, LSV}
\email{place@lsv.ens-cachan.fr and uc.segoufin@inria.fr}
\keywords{Regular Tree Languages, Locally Testable, Characterization}
\subjclass{F.4.3}
\begin{document}

\begin{abstract}
  A regular tree language L is locally testable if membership of a tree
  in L depends only on the presence or absence of some fix set of neighborhoods in the
  tree. In this paper we show that it is decidable whether a regular tree
  language is locally testable. The decidability is shown for ranked trees and
  for unranked unordered trees.
\end{abstract}

\maketitle
\section{Introduction}

This paper is part of a general program trying to understand the expressive
power of first-order logic over trees. We say that a class of regular tree
languages has a decidable characterization if the following problem is
decidable: given as input a finite tree automaton, decide if the recognized
language belongs to the class in question. Usually a decision algorithm
requires a solid understanding of the expressive power of the corresponding
class and is therefore useful in any context where a precise boundary of this
expressive power is crucial. In particular we do not possess yet a decidable
characterization of the tree languages definable in FO($\leq$), the first-order
logic using a binary predicate $\leq$ for the ancestor relation.

We consider here the class of tree languages definable in a fragment of
FO($\leq$) known as \emph{Locally Testable} (LT).  A language is in LT if
membership in the language depends only on the presence or absence of
neighborhoods of a certain size in the tree. A closely related family of
languages is the class LTT of \emph{Locally Threshold Testable} languages.
Membership in such languages is determined by counting the number of
neighborhoods of a certain size up to some threshold. The class LT is the
special case where no counting is done, the threshold is 1.  In this paper we
provide a decidable characterization of the class LT over trees.

The standard approach for deriving a decidable characterization is to first
exhibit a set of closure properties that hold exactly for the languages in the
class under investigation and then show that these closure properties can be
automatically tested. This requires a formalism for expressing the desired
closure properties but also some tools, typically induction mechanisms, for
proving that the properties do characterize the class, and for proving the
decidability of those properties.

Over words one formalism turned out to be successful for characterizing many
classes of regular languages. The closure properties are expressed as
identities on the syntactic monoid or syntactic semigroup of the regular
language.  The syntactic monoid or syntactic semigroup of a regular language is the
transition monoid of its minimal deterministic automaton including or not the
transition induced by the empty word.  For instance the class of word languages definable in
FO$(\leq)$ is characterized by the fact that the syntactic monoid of any such
languages is aperiodic. The latter property corresponds to the identity
$x^\omega=x^{\omega+1}$ where $\omega$ is the size of the monoid.  This
equation is easily verifiable automatically on the syntactic monoid.
Similarly, the classes LTT and LT have been characterized using decidable
identities on the syntactic semigroup~\cite{BS73,McN74,BP89,TW85}.

Over trees the situation is more complex and right now there is no formalism
that can easily express all the known closure properties for the classes for
which we have a decidable characterization. The most successful formalism is
certainly the one introduced in~\cite{forestalgebra} known as forest
algebras. For instance, these forest algebras were used for obtaining decidable
characterizations for the classes of tree languages definable in
EF+EX~\cite{EFEX}, EF$+$F${^{-1}}$~\cite{Boj-utl,place-csl08},
BC-$\Sigma_1(<)$~\cite{luclics08,place-csl08},
$\Delta_2(\leq)$~\cite{lucicalp08,place-csl08}.  However it is not clear yet
how to use forest algebras in a simple way for characterizing the class LTT
over trees and a different formalism was used for obtaining a decidable
characterization for this class~\cite{BS09}.
% and frontier testable languages~\cite{Wil96}.

We were not able to obtain a reasonable set of identities for LT either by using
forest algebras or the formalism used for characterizing LTT. Our approach is
slightly different.

There is another technique that was used on words for deciding the class LT.
It is based on the ``delay theorem''~\cite{Str85,Tilson} for computing the
required size of the neighborhoods: Given an automaton recognizing the language
$L$, a number $k$ can be computed from that automaton such that if $L$ is in LT
then it is in LT by investigating the neighborhoods of size~$k$.  Once this $k$
is available, deciding whether $L$ is indeed in LT or not is a simple
exercise. On words, a decision algorithm for LT (and also for LTT) has been
obtained successfully using this approach~\cite{Boj07}.  Unfortunately all
efforts to prove a similar delay theorem on trees have failed so far.

We obtain a decidable characterization of LT by combining the two approaches
mentioned above. We first exhibit a set of necessary conditions
for a regular tree language to be in LT. Those conditions are expressed using
the formalism introduced for characterizing LTT. We then show that for
languages satisfying such conditions one can compute the required size of the
neighborhoods.  Using this technique we obtain a characterization of LT for
ranked trees and for unranked unordered trees.

\paragraph{\bf Other related work.}
There exist several formalisms that have been used for expressing identities
corresponding to several classes of languages but not in a decidable way. Among
them let us mention the notion of preclones introduced in~\cite{preclones} as
it is close to the one we use in this paper for expressing our necessary
conditions.

Finally we mention the class of frontier testable languages, not expressible in
FO$(<)$, that was given a decidable characterization using a specific
formalism~\cite{Wil96}.

\paragraph{\bf Organization of the paper.}
We start with ranked trees and give the necessary notations and preliminary
results in Section~\ref{section-notation}. Section~\ref{section-necessary}
exhibits several conditions and proves they are necessary for
being in LT. In Section~\ref{section-char} we show that for the languages
satisfying the necessary conditions the required size of the neighborhoods can
be computed, hence concluding the decidability of the characterization. Finally
in Section~\ref{section-unranked} we show how our result extends to unranked
trees. 

%%% Local Variables: 
%%% mode: latex
%%% TeX-master: "main"
%%% ispell-local-dictionary: "american"
%%% End:  

\section{Notations and preliminaries}\label{section-notation}

We first investigate the case of binary trees. The case of unranked
unordered trees will be considered in Section~\ref{section-unranked}.

\paragraph{\bf Trees.}
We fix a finite alphabet $\Sigma$, and consider finite binary trees with labels
in $\Sigma$. All the results presented here extend to arbitrary ranks in a
straightforward way. In the binary case, each node of the tree is either
\emph{a leaf} (has no children) or has exactly two \emph{children}, the left
child and the right child. We use the standard terminology for trees.  For
instance by the \emph{descendant} (resp.  ancestor) relation we mean the
reflexive transitive closure of the child (resp. inverse of child) relation and
by \emph{distance} between two nodes we refer to the length of the shortest
path between the two nodes.  A
\emph{language} is a set of trees.

Given a tree $t$ and a node $x$ of $t$ the \emph{subtree of $t$ rooted at $x$},
consisting of all the nodes of $t$ that are descendant of $x$, is denoted by
\subtree{t}{x}. A \emph{context} is a tree with a designated (unlabeled) leaf
called its {\it port} which acts as a hole.  Given contexts $C$ and $C'$, their
concatenation $C \cdot C'$ is the context formed by identifying the root of
$C'$ with the port of $C$.  A tree $C \cdot t$ can be obtained similarly by
combining a context $C$ and a tree $t$.  Given a tree $t$ and two nodes $x,y$
of $t$ such that $y$ is a descendant (not necessarily strict) of $x$, \emph{the
  context of $t$ between $x$ and $y$}, denoted by $\context{t}{x}{y}$, is
defined by keeping all the nodes of $t$ that are descendants of $x$ but not
descendants of $y$ and by placing the port at $y$. 

We say that a context $C$ \emph{occurs} in $t$ if $C$ is the context of $t$
between $x$ and $y$ for some nodes $x$ and $y$ of $t$.

\paragraph{\bf Types.}
Let $t$ be a tree and $x$ be a node of $t$ and $k$ be a positive integer, the
{\em\ktype} of $x$ is the (isomorphism type of the) restriction of
\subtree{t}{x} to the set of nodes of $t$ at distance at most $k$ from $x$.
When $k$ will be clear from the context we will simply say \emph{type}.  A
\ktype $\tau$ \emph{occurs} in a tree $t$ if there exists a node of $t$ of type
$\tau$. If $C$ is the context \context{t}{x}{y} for some tree $t$ and some
nodes $x,y$ of $t$, then the \ktype of a
node of $C$ is the \ktype of the corresponding node in $t$. Notice that the
\ktype of a node of $C$ depends on the surrounding tree $t$, in particular the
port of $C$ has a \ktype, the one of $y$ in $t$.

Given two trees $t$ and $t'$ we denote by \lessblocksk{t}{t'} the fact that all
\ktypes that occur in $t$ also occur in $t'$. Similarly we can speak of
\lessblocksk{t}{C} when $t$ is a tree and $C$ is \context{t'}{x}{y} for some
tree $t'$ and some nodes $x,y$ of $t'$. We denote by \sameblocksk{t}{t'} the
property that the root of $t$ and the root of $t'$ have the same \ktype and $t$
and $t'$ agree on their \ktypes: \lessblocksk{t}{t'} and \lessblocksk{t'}{t}.
Note that when $k$ is fixed the number of \ktypes is finite and hence the
equivalence relation \sameblockequiv{k} has a finite number of equivalence
classes. This property is no longer true for unranked trees and this is why we
will have to use a different technique for this case.

A language $L$ is said to be \testable{\kappa} if $L$ is a union of equivalence
classes of \sameblockequiv{\kappa}. A language is said to be \emph{locally
  testable} (is in LT) if there is a $\kappa$ such that it is
\testable{\kappa}. In other words, in order to test whether a tree $t$
belongs to $L$ it is enough to check for the presence or absence of
\types{\kappa} in $t$, for some big enough $\kappa$.

\paragraph{\bf Regular Languages.} We assume familiarity with tree automata and
regular tree languages. The interested reader is referred to~\cite{tata} for
more details. Their precise definitions are not important in order to
understand our characterization. However pumping arguments will be used in the
decision algorithms.

\paragraph{\bf The problem.}

We want an algorithm deciding if a given regular language is in LT. When the
complexity is not an issue, we can assume that the language $L$ is given
as a MSO formula. Another option would be to start with a bottom-up tree
automaton for $L$ or, even better, the minimal deterministic bottom-up tree
automaton that recognize $L$. We will come back to the complexity issues in
Section~\ref{section-complexity}.
The main difficulty is to compute a bound on
$\kappa$, the size of the neighborhood, whenever such a $\kappa$ exists.

The word case is a special case of the tree case as it corresponds to trees of
rank~1. A decision procedure for LT was obtained in the word case independently
by~\cite{BS73} and ~\cite{McN74}. A language $L$ is in LT if and only if its
syntactic semigroup satisifies the equations $exe=exexe$ and $exeye=eyexe$,
where $e$ is an arbitrary idempotent ($ee=e$) while $x$ and $y$ are arbitrary
elements of the semigroup. The equations are then easily verified after
computing the syntactic semigroup.

In the case of trees, we were not able to obtain a reasonably simple set of
identities for characterizing LT. Nevertheless we can show:

\begin{thm}\label{main-theorem}
  It is decidable whether a regular tree language is in LT.
\end{thm}

Our strategy for proving Theorem~\ref{main-theorem} is as follows. In a first
step we provide necessary conditions for a language to be in LT. In a second
step we show that if a language $L$ verifies those necessary conditions then we
can compute from an automaton recognizing $L$ a number $\kappa$ such that if
$L$ is in LT then $L$ is \testable{\kappa}. The last step is simple and show
that once $\kappa$ is fixed, it is decidable whether a regular language is
\testable{\kappa}. This last step follows immediately from the fact that once
$\kappa$ is fixed, there are only finitely many \testable{\kappa}
languages and hence one can enumerate them and test whether $L$ is
equivalent to one of them or not.

Given a regular language $L$, testing whether $L$ is in LT
is then done as follows: (1) compute from $L$ the $\kappa$ of the second step and (2)
test whether $L$ is \testable{\kappa} using the third step. The first step
implies that this algorithm is correct.

Before starting providing the proof details we note that there exist examples
showing that the necessary conditions are not sufficient.  Such an example will
be provided in Section~\ref{section-nonsuff}.  We also note that the problem of
finding $\kappa$ whenever such a $\kappa$ exists is a special case of the delay
theorem mentioned in the introduction. When applied to LT, the delay theorem
says that if a finite state automaton $A$ recognizes a language in LT then this
language must be \testable{\kappa} for a $\kappa$ computable from $A$.  The
delay theorem was proved over words in~\cite{Str85} and can be used in order to
decide whether a regular language is in LT as explained in~\cite{Boj07}.  We
were not able to prove such a general theorem for trees. 

\section{Necessary conditions}\label{section-necessary}

In this section we exhibit necessary conditions for a regular language to be in
LT. These conditions will play a crucial role in our decision algorithm.  These
conditions are expressed using the same formalism as the one used
in~\cite{BS09} for characterizing LTT.

\paragraph{\bf Guarded operations.}

Let $t$ be a tree, and $x,x'$ be two nodes of $t$ such that $x$ and $x'$ are
not related by the descendant relationship.  The \emph{horizontal swap} of $t$
at nodes $x$ and $x'$ is the tree $t'$ constructed from $t$ by replacing
\subtree{t}{x} with \subtree{t}{x'} and vice-versa, see Figure~\ref{fig-H-swap}
(left). A horizontal swap is said to be \emph{$k$-guarded} if $x$ and $x'$ have
the same \ktype.

Let $t$ be a tree and $x,y,z$ be three nodes of $t$ such that $x,y,z$ are not
related by the descendant relationship and such that
$\subtree{t}{x}=\subtree{t}{y}$. The \emph{horizontal transfer} of $t$ at
$x,y,z$ is the tree $t'$ constructed from $t$ by replacing \subtree{t}{y} with
a copy of \subtree{t}{z}, see Figure~\ref{fig-H-swap} (right). A horizontal
transfer is $k$-guarded if $x,y,z$ have the same \ktype.

\begin{figure}[h]
\begin{center}
\psset{unit=.9cm}
\begin{pspicture}(7,3)
\pspolygon(1.5,3)(0.5,1.5)(2.5,1.5)
\pspolygon[fillstyle=solid,fillcolor=my2](0.8,1.5)(0.2,0)(1.4,0)
\pspolygon[fillstyle=solid,fillcolor=my1](2.2,1.5)(1.6,0)(2.8,0)
\rput(0.81,0.5){\subtree{t}{x}}
\rput(2.26,0.4){\subtree{t}{x'}}
\rput(0.5,1.3){$x$}
\rput(1.9,1.3){$x'$}

\rput(3.5,1.5){$\Longleftrightarrow$}

\rput(4,0){\pspolygon(1.5,3)(0.5,1.5)(2.5,1.5)}
\rput(4,0){\pspolygon[fillstyle=solid,fillcolor=my1](0.8,1.5)(0.2,0)(1.4,0)}
\rput(4,0){\pspolygon[fillstyle=solid,fillcolor=my2](2.2,1.5)(1.6,0)(2.8,0)}
\rput(4.86,0.4){\subtree{t}{x'}}
\rput(6.2,0.5){\subtree{t}{x}}
\rput(4.5,1.3){$x$}
\rput(5.9,1.3){$x'$}
\end{pspicture}
\hspace{2cm}
\psset{unit=.9cm}
\begin{pspicture}(7,3)
\pspolygon(1.5,3)(0,1.5)(3,1.5)
\pspolygon[fillstyle=solid,fillcolor=my1](0.5,1.5)(0,0)(1,0)
\pspolygon[fillstyle=solid,fillcolor=my1](1.5,1.5)(1,0)(2,0)
\pspolygon[fillstyle=solid,fillcolor=my2](2.5,1.5)(2,0)(3,0)

\rput(0.5,0.5){\subtree{t}{x}}
\rput(1.5,0.5){\subtree{t}{x}}
\rput(2.5,0.5){\subtree{t}{z}}
\rput(0.3,1.3){$x$}
\rput(1.3,1.3){$y$}
\rput(2.3,1.3){$z$}

\rput(3.5,1.5){$\Longleftrightarrow$}

\rput(4,0){\pspolygon(1.5,3)(0,1.5)(3,1.5)}
\rput(4,0){\pspolygon[fillstyle=solid,fillcolor=my1](0.5,1.5)(0,0)(1,0)}
\rput(4,0){\pspolygon[fillstyle=solid,fillcolor=my2](1.5,1.5)(1,0)(2,0)}
\rput(4,0){\pspolygon[fillstyle=solid,fillcolor=my2](2.5,1.5)(2,0)(3,0)}

\rput(4.5,0.5){\subtree{t}{x}}
\rput(5.5,0.5){\subtree{t}{z}}
\rput(6.5,0.5){\subtree{t}{z}}
\rput(4.3,1.3){$x$}
\rput(5.3,1.3){$y$}
\rput(6.3,1.3){$z$}

\end{pspicture}
\caption{Horizontal Swap (left) and Horizontal Transfer (right)}\label{fig-H-swap}
\end{center}
\end{figure}
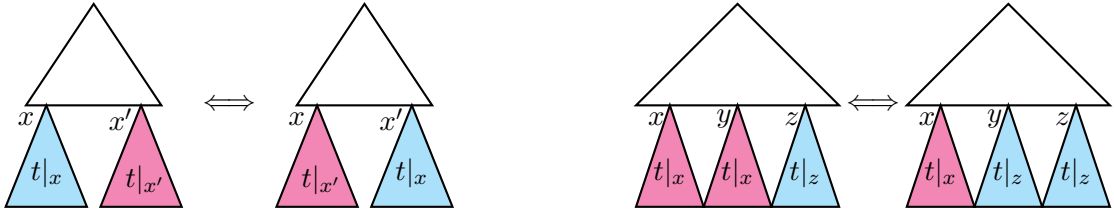

Let $t$ be a tree of root $a$, and $x,y,z$ be three nodes of $t$ such that $y$
is a descendant of $x$ and $z$ is a descendant of $y$.  The \emph{vertical
  swap} of $t$ at $x,y,z$ is the tree $t'$ constructed from $t$ by swapping the
context between $x$ and $y$ with the context between $y$ and $z$, see
Figure~\ref{fig-V-stutter} (left).  More formally let $C=\context{t}{a}{x}$,
$\Delta_1=\context{t}{x}{y}$, $\Delta_2=\context{t}{y}{z}$ and
$T=\subtree{t}{z}$.  We then have $t=C \cdot \Delta_1 \cdot \Delta_2 \cdot
T$. The tree $t'$ is defined as $t'=C \cdot \Delta_2 \cdot \Delta_1 \cdot T$. A
vertical swap is \emph{$k$-guarded} if $x,y,z$ have the same \ktype.

Let $t$ be a tree of root $a$, and $x,y,z$ be three nodes of $t$ such that $y$
is a descendant of $x$ and $z$ is a descendant of $y$ such that
$\Delta=\context{t}{x}{y}=\context{t}{y}{z}$. The \emph{vertical
    stutter} of $t$ at $x,y,z$ is the tree $t'$ constructed from $t$ by
  removing the context between $x$ and $y$, see Figure~\ref{fig-V-stutter}
  (right). A vertical stutter is $k$-guarded if $x,y,z$ have the same \ktype.

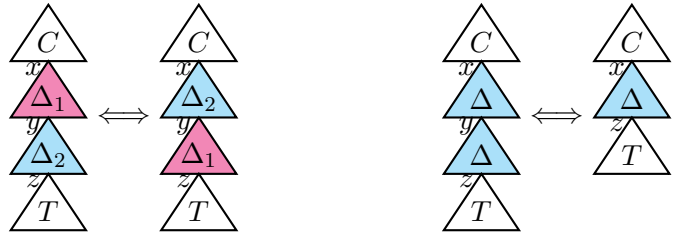
\begin{figure}[h]
\psset{unit=.5cm}
\begin{center}
\begin{pspicture}(7,6)
\pspolygon(1.5,6)(0.5,4.5)(2.5,4.5)
\pspolygon[fillstyle=solid,fillcolor=my1](1.5,4.5)(0.5,3)(2.5,3)
\pspolygon[fillstyle=solid,fillcolor=my2](1.5,3)(0.5,1.5)(2.5,1.5)
\pspolygon(1.5,1.5)(0.5,0)(2.5,0)

\rput(1.5,5){$C$}
\rput(1.5,3.5){$\Delta_1$}
\rput(1.5,2){$\Delta_2$}
\rput(1.5,0.5){$T$}

\rput(1.1,4.3){$x$}
\rput(1.1,2.8){$y$}
\rput(1.1,1.3){$z$}

\rput(3.5,3){$\Longleftrightarrow$}

\rput(4,0){\pspolygon(1.5,6)(0.5,4.5)(2.5,4.5)}
\rput(4,0){\pspolygon[fillstyle=solid,fillcolor=my2](1.5,4.5)(0.5,3)(2.5,3)}
\rput(4,0){\pspolygon[fillstyle=solid,fillcolor=my1](1.5,3)(0.5,1.5)(2.5,1.5)}
\rput(4,0){\pspolygon(1.5,1.5)(0.5,0)(2.5,0)}

\rput(5.5,5){$C$}
\rput(5.5,3.5){$\Delta_2$}
\rput(5.5,2){$\Delta_1$}
\rput(5.5,0.5){$T$}

\rput(5.1,4.3){$x$}
\rput(5.1,2.8){$y$}
\rput(5.1,1.3){$z$}

\end{pspicture}
\hspace{2cm}
\begin{pspicture}(7,6)
\pspolygon(1.5,6)(0.5,4.5)(2.5,4.5)
\pspolygon[fillstyle=solid,fillcolor=my2](1.5,4.5)(0.5,3)(2.5,3)
\pspolygon[fillstyle=solid,fillcolor=my2](1.5,3)(0.5,1.5)(2.5,1.5)
\pspolygon(1.5,1.5)(0.5,0)(2.5,0)

\rput(1.5,5){$C$}
\rput(1.5,3.5){$\Delta$}
\rput(1.5,2){$\Delta$}
\rput(1.5,0.5){$T$}

\rput(1.1,4.3){$x$}
\rput(1.1,2.8){$y$}
\rput(1.1,1.3){$z$}

\rput(3.5,3){$\Longleftrightarrow$}

\rput(4,0){\pspolygon(1.5,6)(0.5,4.5)(2.5,4.5)}
\rput(4,0){\pspolygon[fillstyle=solid,fillcolor=my2](1.5,4.5)(0.5,3)(2.5,3)}
\rput(4,0){\pspolygon(1.5,3)(0.5,1.5)(2.5,1.5)}

\rput(5.5,5){$C$}
\rput(5.5,3.5){$\Delta$}
\rput(5.5,2){$T$}

\rput(5.1,4.3){$x$}
\rput(5.1,2.8){$z$}
\end{pspicture}
\end{center}
\caption{Vertical Swap (left) and Vertical Stutter (right)}\label{fig-V-stutter}
\end{figure}

Let $L$ be a tree language and $k$ be a number. If X is any of the four
constructions above, horizontal or vertical swap, or vertical stutter or
horizontal transfer, we say that $L$ is \emph{closed under $k$-guarded X} if
for every tree $t$ and every tree $t'$ constructed from $t$ using $k$-guarded X
then $t$ is in $L$ iff $t'$ is in $L$. Notice that being closed under
$k$-guarded X implies being closed under $k'$-guarded X for $k' > k$.  An
important observation is that each of the $k$-guarded operation does not
affect the set of \types{(k+1)} occurring in the trees.

If $L$ is closed under all the $k$-guarded operations described above, we say
that $L$ is \emph{\ktame.} A language is said to be \emph{\tame} if it is
\ktame for some $k$.

The following simple result shows that \tameness is a necessary condition for
LT.

\begin{prop}\label{prop-necessary}
  If $L$ is in LT then $L$ is \tame.
\end{prop}

\proof Assume $L$ is in LT. Then there is a $\kappa$ such that $L$ is
\testable{\kappa}. We show that $L$ is $\kappa$-\tame.  This is a
straightforward consequence of the fact that all the $\kappa$-guarded
operations above preserve \types{(\kappa+1)} and hence preserve
\types{\kappa}.\qed

A simple pumping argument shows that if $L$ is \tame then it is $k$-\tame for
$k$ bounded by a polynomial in the size of the minimal deterministic bottom-up
tree automaton recognizing $L$.

\begin{prop}\label{nec-condition-decision}
  Given a regular language $L$ and $A$ the minimal deterministic bottom-up
  tree automaton recognizing $L$, we have $L$ is \tame iff $L$ is $k_0$-\tame
  for $k_0 = |A|^3+1$.
\end{prop}

\begin{proof}
  We prove that if $X$ is one of the four operations that defines \tameness,
  then if $L$ is closed under $k$-guarded $X$ for $k > k_0$, then $L$ is closed
  under $k_0$-guarded $X$. This will imply that if $L$ is \ktame then it is
  $k_0$-\tame. 

  Consider the case of $k$-guarded horizontal transfer and assume $L$ is closed
  under $k$-guarded horizontal transfers. We show that $L$ is closed under
  $k_0$-guarded horizontal transfers. Let $t$ be a tree and $x,y,z$ three nodes
  of $t$ having the same \type{k_0} and not related by the descendant
  relation such that $\subtree{t}{x}=\subtree{t}{y}$. We need to show that
  replacing \subtree{t}{y} by a copy of  \subtree{t}{z} does not affect
  membership in $L$. 

  We do this in three steps, first we transform $t$ by pumping in parallel in
  the subtrees of $x,y$ and $z$ until $x,y,z$ have the same \ktype, then we use
  the closure of $L$ under $k$-guarded horizontal transfer in order to replace
  \subtree{t}{y} by a copy of \subtree{t}{z}, and finally we backtrack the
  initial pumping phase in order to recover the initial subtrees.

  We let $t_1=\subtree{t}{x}$ and $t_2=\subtree{t}{z}$ and we
  assume for now on that $t_1 \neq t_2$. By \emph{position} we denote a string
  $w$ of $\set{0,1}^*$. A position $w$ is \emph{realized} in a tree $t$ if
  there is a node $x$ of $t$ such that if $x_1,\cdots,x_n=x$ is the sequence of
  nodes in the path from the root of $t$ to $x$ then for all $i\leq n$ the
  $i^{th}$ bit of $w$ is zero if $x_i$ is a left child and it is one if $x_i$
  is a right child. We order positions by first comparing their respective
  length and then using the lexicographical order.

  By hypothesis $t_1$ and $t_2$ are identical up to depth at least $k_0$. Let
  $w$ be the first position such that $t_1$ and $t_2$ differ at that position.
  That can be either because $w$ is realized in $t_1$ but not in $t_2$, or vice
  versa, or $w$ is realized in both trees but the labels of the corresponding
  nodes differ. We know that the length $n$ of $w$ is strictly greater than
  $k_0$.  If $n >k$, we are done with the first phase. We assume now that $n
  \leq k$. 

  Consider the run $r$ of $A$ on $t$. The run assigns a state $q$ to each node
  of $t$. From $r$ we assign to each position $w' < w$ a pair of states
  $(q,q')$ such that $q$ is the state given by $r$ at the corresponding node in
  $t_1$ while $q'$ is the state given by $r$ at the corresponding node in
  $t_2$. Because $n > k_0 > |A|^2$, there must be two prefixes $w_1$ and $w_2$
  of $w$ that were assigned the same pair of states. Consider the context
  $C_1=t_1[v_1,v_2]$ where $v$ and $v'$ are the nodes of $t_1$ at position
  $w_1$ and $w_2$ and the context $C_2=t_2[v'_1,v'_2]$ where $v$ and $v'$ are
  the nodes of $t_2$ at position $w_1$ and $w_2$. Without affecting membership
  in $L$, we can therefore at the same time duplicate $C_1$ in the two copies of $t_1$
  rooted at $x$ and $y$ and $C_2$ in the copy of $t_2$ rooted at $z$.
  
  Let $t'_1$ and $t'_2$ be the subtrees of the resulting tree, rooted
  respectively at $x$ and $z$. The reader can verify that $t'_1$ and $t'_2$ now
  differ at a position strictly greater than $w$.

  Performing this repeatedly, we eventually arrive at a situation where the
  subtree $t'_1$ rooted at $x$ and $y$ agree up to depth $k$ with the subtree
  rooted at $z$. We can now apply $k$-guarded horizontal transfer and replace
  one occurrence of $t'_1$ by a copy of $t'_2$. We can then replace $t'_1$ by
  $t_1$ and both copies of $t'_2$ by $t_2$ without affecting membership in $L$.

  The other operations are done similarly. For the horizontal swap, we pump
  the subtrees at positions $x$ and $x'$ simultaneously, which is possible
  because $k_0 > |A|^2$. For vertical swap, we pump the subtrees at
  the positions $x$, $y$ and $z$ simultaneously, and that requires $k_0 >
  |A|^3$. Finally, for vertical stutter, we pump the subtrees at the
  positions $x$, $y$ and $z$ simultaneously, which again requires $k_0 > |A|^3$.
\end{proof}

Once $k$ is fixed, a brute force algorithm can check whether $L$ is \ktame or
not. Indeed, as $L$ is regular, when testing for closure under $k$-guarded $X$,
it is enough to consider all relevant states and appropriate transition functions of the
automata instead of all trees and all contexts. See for instance Lemma~12 and
Lemma~13 in~\cite{BS09}.

Therefore Proposition~\ref{nec-condition-decision} implies that \tameness is
decidable. However for deciding LT we will only need the bound on $k_0$ given by
the proposition.

\section{Deciding LT}\label{section-char}

In this section we show that it is decidable whether a regular tree language is
in LT. This is done by showing that if a regular language $L$ is in LT then
there is a $\kappa$ computable from an automaton recognizing $L$ such that $L$
is in fact \testable{\kappa}. Recall that once this $\kappa$ is computed the
decision procedure simply enumerates all the finitely many \testable{\kappa}
languages and tests whether $L$ is one of them.

Assume $L$ is in LT. By Proposition~\ref{prop-necessary}, $L$ is \tame. Even
more, from Proposition~\ref{nec-condition-decision}, one can effectively compute
a $k$ such that $L$ is \ktame. Hence Theorem~\ref{main-theorem} follows from the
following proposition.

\begin{prop}\label{prop-nec-implies-LT}
  Assume $L$ is a \ktame regular tree language then $L$ is in LT iff $L$ is
  \testable{\kappa} where $\kappa$ is computable from $k$.
\end{prop}

Recall that for each $k$ the number of \ktypes is finite. Let $\beta_k$ be this
number. Proposition~\ref{prop-nec-implies-LT} is an immediate consequence of
the following proposition.

\begin{prop}\label{lemma-pumping}
  Let $L$ be a \ktame regular tree language. Set $\kappa=\beta_k + k + 1$.
  Then for all $l>\kappa$ and any two trees $t,t'$ if
  \sameblocks{t}{t'}{\kappa} then there exist two trees $T,T'$ with
\begin{enumerate}[\em(1)]
\item $t \in L\ $ ~iff~ $\ T \in L$
\item $t' \in L\ $ ~iff~ $\ T' \in L$
\item \sameblocks{T}{T'}{l}
\end{enumerate}
\end{prop}

\begin{proof}[Proof of Proposition~\ref{prop-nec-implies-LT} using
  Proposition~\ref{lemma-pumping}]
  Assume $L$ is \ktame and let $\kappa$ be defined as in
  Proposition~\ref{lemma-pumping}. We show that $L$ is in LT iff $L$ is
  \testable{\kappa}. Assume $L$ is in LT. Then $L$ is \testable{l} for some $l
  \in \mathbb{N}$. We show that $L$ is actually \testable{\kappa}. For this it
  suffices to show that for any pair of trees $t$ and $t'$, if
  \sameblocks{t}{t'}{\kappa} then $t \in L$ iff $t' \in L$. Let $T$ and $T'$ be
  the trees constructed for $l$ from $t$ and $t'$ by
  Proposition~\ref{lemma-pumping}. We have \sameblocks{T}{T'}{l} and therefore
  $T \in L$ iff $T' \in L$. As we also have $t \in L$ iff $T \in L$ and $t' \in
  L$ iff $T' \in L$, the proposition is proved.
\end{proof}

Before proving Proposition~\ref{lemma-pumping} we need some extra terminology.
A non-empty context $C$ occurring in a tree $t$ is a \emph{loop of \ktype
  $\tau$} if the \ktype of its root and the \ktype of its port is $\tau$. A
non-empty context $C$ occurring in a tree $t$ is a \kloop if there is some
\ktype $\tau$ such that $C$ is a loop of \ktype $\tau$. Given a context $C$ we call the
path from the root of $C$ to its port the \emph{principal path of
  $C$}. Finally, the result of the \emph{insertion} of a \kloop $C$ at a node
$x$ of a tree $t$ is a tree $T$ such that if $t=D \cdot \subtree{t}{x}$ then
$T=D\cdot C \cdot \subtree{t}{x}$. Typically an insertion will occur only when
the \ktype of $x$ is $\tau$ and $C$ is a loop of \ktype $\tau$. In this case
the \ktypes of the nodes initially from $t$ and of the nodes of $C$ are unchanged by this operation.

\begin{proof}[Proof of Proposition~\ref{lemma-pumping}]

  Suppose that $L$ is \ktame. We start by proving two lemmas that will be
  useful in the construction of $T$ and $T'$. Essentially these lemmas show
  that even though being \ktame does not imply being \testable{(k+1)} (recall
  the remark after Theorem~\ref{main-theorem}) some of the expected behavior of
  \testable{(k+1)} languages can still be derived from being \ktame. The first
  lemma shows that given a tree $t$, without affecting membership in $L$, we
  can replace a subtree of $t$ containing only \types{(k+1)} occurring elsewhere
  in $t$ by any other subtree satisfying this property and having the same
  \ktype as root. The second lemma shows the same result for contexts by
  showing that a \kloop can be inserted in a tree $t$ without affecting
  membership in $L$ as soon as all the \types{(k+1)} of the \kloop were
  already present in $t$. After proving these lemmas we will see how to
  combine them for constructing $T$ and $T'$.

\begin{lem}\label{claim-transfer-branch} 
  Assume $L$ is \ktame.  Let $t=Ds$ be a tree where $s$ is a subtree of $t$.
  Let $s'$ be another tree such that the roots of $s$ and $s'$ have the same
  \ktype.

If \lessblocks{s}{D}{k+1} and \lessblocks{s'}{D}{k+1} then
$Ds\in L$ iff $Ds'\in L$.
\end{lem}

\begin{proof}

  We start by proving a special case of the Lemma when $s'$ is actually another
  subtree of $t$. We will use repeatedly this particular case in the proof.

  \begin{claim} \label{claim-transfer-enhanced} Assume $L$ is \ktame. Let $t$
    be a tree and let $x,y$ be two nodes of $t$ not related by the descendant
    relationship and with the same \ktype. We write $s = \subtree{t}{x}$,
    $s' = \subtree{t}{y}$ and $C$ the context such that $t = Cs$.
    If $\lessblocks{s}{C}{k+1}$ then $Cs \in L$ iff $Cs' \in L$.
\end{claim}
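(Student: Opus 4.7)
The plan is to transform $t = Cs$ into $Cs'$ through a sequence of $k$-guarded operations---each of which preserves $L$-membership by \ktameness. The natural tool is horizontal transfer, which requires three pairwise-incomparable nodes of the same \ktype, two of which carry identical subtrees; the transfer then replaces the second such subtree by a copy of the third's. Since in general $\subtree{t}{x} = s$ and $\subtree{t}{y} = s'$ differ, no direct transfer at $(x,y,\cdot)$ is available. The main work is therefore to fabricate an auxiliary node $z$ in the tree whose subtree equals $s$.

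The hypothesis $\lessblocks{s}{C}{k+1}$ immediately produces a candidate $z$ in $C$, incomparable to $x$ and $y$, with the same \ktype as $x$: indeed, the \type{(k+1)} of $x$ occurs in $s$ and hence in $C$, which in particular forces a \ktype match at some $z$. If I can morph the subtree at $z$ into a copy of $s$ via $k$-guarded operations preserving $L$-membership, then horizontal transfer at $(z,x,y)$ applies and replaces $\subtree{t}{x} = s$ by a copy of $\subtree{t}{y} = s'$; a symmetric restoration of $z$'s original subtree then yields exactly $Cs'$. To carry out the morphing I would use a recursive application of the claim itself, organizing the proof as an induction on a suitable measure---for instance, the depth of $s$. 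The base case is immediate: if $s$ has depth at most $k$, then the equality of \ktypes at $x$ and $y$ already forces $s = s'$, and there is nothing to do.

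The main technical obstacle I foresee is the bookkeeping needed to ensure that each recursive invocation continues to satisfy a hypothesis of the form $\lessblocks{u}{\cdot}{k+1}$ for the subtree $u$ being morphed, and that the inductive measure strictly decreases at every step. A key observation that should help is that none of the four $k$-guarded operations creates a \type{(k+1)} that was not already present in the tree, so the set of \types{(k+1)} available as witnesses only shrinks (or stays the same) along the construction. Choosing the right measure and carefully sequencing the operations so that the needed type-occurrence hypotheses survive every intermediate step is where I expect the real delicacy of the argument to lie.
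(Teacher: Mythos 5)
Your skeleton---induction on the depth of $s$, the same base case, a witness node $z$ in $C$ supplied by \lessblocks{s}{C}{k+1}, and $k$-guarded horizontal transfer as the engine---is the paper's. But the two steps you defer as ``delicate'' are exactly where the argument lives, and as you have set them up they would fail. First, the witness $z$ need \emph{not} be incomparable to $x$ and $y$: the node of $C$ carrying the \type{(k+1)} $\tau$ of $x$ may be a proper ancestor of $x$, an ancestor or descendant of $y$, or an ancestor of both. You are not entitled to assert incomparability, and a substantial part of the paper's proof consists precisely of reducing these configurations to the incomparable one by $k$-guarded vertical stutters and horizontal swaps (e.g.\ duplicating the context between $z$ and $x$ to manufacture a fresh occurrence of $\tau$ off the path to $y$, or swapping $s$ and $s'$ to exchange the roles of $x$ and $y$).

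Second, your morphing goes in the wrong direction. To turn \subtree{t}{z} into a copy of $s$ by a recursive call to the claim, you would need every \type{(k+1)} of \subtree{t}{z} to occur outside \subtree{t}{z}---and nothing in the hypotheses gives you that: \subtree{t}{z} may contain types occurring nowhere else in the tree, so the required hypothesis of the form $\lessblocks{u}{\cdot}{k+1}$ is genuinely unavailable, not merely a bookkeeping issue. The hypothesis \lessblocks{s}{C}{k+1} licenses the opposite move: write $s=a(s_1,s_2)$, observe that $\subtree{t}{z}=a(s''_1,s''_2)$ with matching root \ktypes, and recursively replace $s_i$ by $s''_i$ at the children of $x$ (all \types{(k+1)} of $s_i$ occur in $s$, hence in $C$, hence in the larger ambient context, and the depth strictly drops). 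This yields two identical copies of \subtree{t}{z}, at $x$ and at $z$, after which the horizontal transfer you describe installs $s'$ at $x$---and no ``restoration of $z$'s original subtree'' is needed, since that subtree is never modified. The restoration step you plan would run into the same missing-hypothesis problem.
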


\begin{proof} The proof is done by induction on the depth of $s$ and makes
  crucial use of $k$-guarded horizontal transfer.

  Assume first that $s$ is of depth less than $k$. Since $x$ and $y$ have the same
  \ktype, we have $s = s'$ and the result follows.

  Assume now that $s$ is of depth greater than $k$.
 
  Let $\tau$ be the \type{(k+1)} of $x$. We assume that $s$ is a tree of the form
  $a(s_1,s_2)$. Notice that the \ktype of the roots of $s_1$ and $s_2$ are completely determined
  by $\tau$. Since $\lessblocks{s}{C}{k+1}$, there exists a node $z$ in $C$ of
  type $\tau$. We write $s'' = \subtree{t}{z}$.

  We consider several cases depending on the relationship between $x$, $y$ and $z$.
  We first consider the case where $x$ and $z$ are not related by the descendant
  relationship, then we reduce the other cases to this case.

  Assume that $x$ and $z$ are not related by the descendant relationship. Since
  $s''$ is of type $\tau$, it is of the form $a(s''_1,s''_2)$ where the roots
  of $s''_1$ and $s''_2$ have the same \ktype as respectively the roots of
  $s_1$ and $s_2$. By hypothesis all the \types{(k+1)} of $s_1$ and $s_2$
  already appear in $C$ and hence we can apply the induction hypothesis to
  replace $s_1$ by $s''_1$ and $s_2$ by $s''_2$ without affecting membership
  in $L$. Notice that the resulting tree is $Cs''$, that $t=Cs \in L$ iff
  $Cs'' \in L$, and that $Cs''$ contains two copies of the subtree $s''$, one
  at position $x$ and one at position $z$. We now show that we can derive
  $Cs'$ from $Cs''$ using $k$-guarded operations. Since $L$ is \ktame it will
  follow that that $Cs'' \in L$ iff $Cs' \in L$ and thus $Cs \in L$ iff $Cs' \in L$.
  Let $t''=Cs''$ and we distinguish between three cases depending on the relationship between $z$ and $y$ in
  $t''$:

\tikzstyle{arr} = [line width=4pt, ->]
\tikzstyle{bag}=[minimum size=20pt,inner sep=0pt]
\tikzstyle{dot}=[draw,circle,fill,minimum size=4pt,inner sep=0pt]

\begin{enumerate}[(1)]
\item If $z$ is a descendant of $y$, let $D=t''[y,z]$ and notice that $s'=Ds''$. Since $x$, $y$ and $z$ have
  the same \ktype, we use $k$-guarded vertical stutter to duplicate $D$ and a
  $k$-guarded horizontal swap to move the new copy of $D$ at position $x$ (see
  the picture below). The resulting tree is $Cs'$ as desired.

\begin{center}
\begin{tikzpicture}

\draw (2.5,2) -- (1.2,1.2) -- (3.8,1.2) -- (2.5,2);

\draw (1.7,1.2) -- (1.2,0.6) -- (2.2,0.6) -- (1.7,1.2);
\node[dot] at (1.7,1.2) {};
\node[bag] at (1.7,0.8) {$s''$};
\node[bag] at (1.4,1.0) {$x$};

\draw (3.3,1.2) -- (2.8,0.6) -- (3.8,0.6) -- (3.3,1.2);
\node[dot] at (3.3,1.2) {};
\draw (3.3,0.6) -- (2.8,0) -- (3.8,0) -- (3.3,0.6);
\node[dot] at (3.3,0.6) {};

\node[bag] at (3.3,0.8) {$D$};
\node[bag] at (3.3,0.2) {$s''$};
\node[bag] at (3.0,1.0) {$y$};
\node[bag] at (3.0,0.4) {$z$};

\node[bag] at (4.5,1.2) {$\Longrightarrow$};
\node[bag] at (4.5,1.8) {\footnotesize Vertical};
\node[bag] at (4.5,1.5) {\footnotesize Stutter};

\draw (6.5,2) -- (5.2,1.2) -- (7.8,1.2) -- (6.5,2);

\draw (5.7,1.2) -- (5.2,0.6) -- (6.2,0.6) -- (5.7,1.2);
\node[dot] at (5.7,1.2) {};
\node[bag] at (5.7,0.8) {$s''$};
\node[bag] at (5.4,1.0) {$x$};

\draw (7.3,1.2) -- (6.8,0.6) -- (7.8,0.6) -- (7.3,1.2);
\node[dot] at (7.3,1.2) {};
\draw (7.3,0.6) -- (6.8,0) -- (7.8,0) -- (7.3,0.6);
\node[dot] at (7.3,0.6) {};
\draw (7.3,0) -- (6.8,-0.6) -- (7.8,-0.6) -- (7.3,0);
\node[dot] at (7.3,0) {};

\node[bag] at (7.3,0.8) {$D$};
\node[bag] at (7.3,0.2) {$D$};
\node[bag] at (7.3,-0.4) {$s''$};
\node[bag] at (7.0,1.0) {$y$};
\node[bag] at (7.0,-0.2) {$z$};

\node[bag] at (8.5,1.2) {$\Longrightarrow$};
\node[bag] at (8.5,1.8) {\footnotesize Horizontal};
\node[bag] at (8.5,1.5) {\footnotesize Swap};

\draw (10.5,2) -- (9.2,1.2) -- (11.8,1.2) -- (10.5,2);

\draw (9.7,1.2) -- (9.2,0.6) -- (10.2,0.6) -- (9.7,1.2);
\node[dot] at (9.7,1.2) {};
\draw (9.7,0.6) -- (9.2,0) -- (10.2,0) -- (9.7,0.6);
\node[dot] at (9.7,0.6) {};

\node[bag] at (9.7,0.8) {$D$};
\node[bag] at (9.7,0.2) {$s''$};
\node[bag] at (9.4,1.0) {$x$};

\draw (11.3,1.2) -- (10.8,0.6) -- (11.8,0.6) -- (11.3,1.2);
\node[dot] at (11.3,1.2) {};
\draw (11.3,0.6) -- (10.8,0) -- (11.8,0) -- (11.3,0.6);
\node[dot] at (11.3,0.6) {};

\node[bag] at (11.3,0.8) {$D$};
\node[bag] at (11.3,0.2) {$s''$};
\node[bag] at (11.0,1.0) {$y$};

\end{tikzpicture}
\end{center}

\item If $z$ is an ancestor of $y$, let $D=t''[z,y]$ and notice that $s''=Ds'$. Since $y$ and $x$ have the
  same \ktype, we use $k$-guarded horizontal swap followed by a $k$-guarded vertical
  stutter to delete the copy of $D$ (see the picture below). The resulting tree is
  $Cs'$ as desired.

\begin{center}
\begin{tikzpicture}

\draw (2.5,2) -- (1.2,1.2) -- (3.8,1.2) -- (2.5,2);

\draw (1.7,1.2) -- (1.2,0.6) -- (2.2,0.6) -- (1.7,1.2);
\node[dot] at (1.7,1.2) {};
\draw (1.7,0.6) -- (1.2,0) -- (2.2,0) -- (1.7,0.6);
\node[dot] at (1.7,0.6) {};

\node[bag] at (1.7,0.8) {$D$};
\node[bag] at (1.7,0.2) {$s'$};
\node[bag] at (1.4,1.0) {$x$};

\draw (3.3,1.2) -- (2.8,0.6) -- (3.8,0.6) -- (3.3,1.2);
\node[dot] at (3.3,1.2) {};
\draw (3.3,0.6) -- (2.8,0) -- (3.8,0) -- (3.3,0.6);
\node[dot] at (3.3,0.6) {};

\node[bag] at (3.3,0.8) {$D$};
\node[bag] at (3.0,0.4) {$y$};
\node[bag] at (3.3,0.2) {$s'$};
\node[bag] at (3.0,1.0) {$z$};

\node[bag] at (4.5,1.2) {$\Longrightarrow$};
\node[bag] at (4.5,1.8) {\footnotesize Horizontal};
\node[bag] at (4.5,1.5) {\footnotesize Swap};

\draw (6.5,2) -- (5.2,1.2) -- (7.8,1.2) -- (6.5,2);

\draw (5.7,1.2) -- (5.2,0.6) -- (6.2,0.6) -- (5.7,1.2);
\node[dot] at (5.7,1.2) {};
\node[bag] at (5.7,0.8) {$s'$};
\node[bag] at (5.4,1.0) {$x$};

\draw (7.3,1.2) -- (6.8,0.6) -- (7.8,0.6) -- (7.3,1.2);
\node[dot] at (7.3,1.2) {};
\draw (7.3,0.6) -- (6.8,0) -- (7.8,0) -- (7.3,0.6);
\node[dot] at (7.3,0.6) {};
\draw (7.3,0) -- (6.8,-0.6) -- (7.8,-0.6) -- (7.3,0);
\node[dot] at (7.3,0) {};

\node[bag] at (7.3,0.8) {$D$};
\node[bag] at (7.3,0.2) {$D$};
\node[bag] at (7.3,-0.4) {$s'$};
\node[bag] at (7.0,-0.2) {$y$};

\node[bag] at (8.5,1.2) {$\Longrightarrow$};
\node[bag] at (8.5,1.8) {\footnotesize Vertical};
\node[bag] at (8.5,1.5) {\footnotesize Stutter};

\draw (10.5,2) -- (9.2,1.2) -- (11.8,1.2) -- (10.5,2);

\draw (9.7,1.2) -- (9.2,0.6) -- (10.2,0.6) -- (9.7,1.2);
\node[dot] at (9.7,1.2) {};
\node[bag] at (9.7,0.8) {$s'$};
\node[bag] at (9.4,1.0) {$x$};

\draw (11.3,1.2) -- (10.8,0.6) -- (11.8,0.6) -- (11.3,1.2);
\node[dot] at (11.3,1.2) {};
\draw (11.3,0.6) -- (10.8,0) -- (11.8,0) -- (11.3,0.6);
\node[dot] at (11.3,0.6) {};

\node[bag] at (11.3,0.8) {$D$};
\node[bag] at (11.3,0.2) {$s'$};
\node[bag] at (11.0,0.4) {$y$};

\end{tikzpicture}
\end{center}

\item If $z$ and $y$ are not related by the descendant relation, then $x$,
 $y$ and $z$ have the same \ktype and $\subtree{t''}{x} = \subtree{t''}{z}$.
 We use $k$-guarded horizontal transfer to replace \subtree{t''}{x} with
 \subtree{t''}{y} as depicted below.

\begin{center}
\begin{tikzpicture}

\draw (1.5,2.2) -- (0.2,1.2) -- (2.8,1.2) -- (1.5,2.2);

\draw (0.3,1.2) -- (-0.2,0) -- (0.8,0) -- (0.3,1.2);

\node[bag] at (0.3,0.4) {$s''$};
\node[bag] at (0,1.0) {$x$};
\node[dot] at (0.3,1.2) {};

\draw (1.5,1.2) -- (1.0,0) -- (2.0,0) -- (1.5,1.2);

\node[bag] at (1.5,0.4) {$s''$};
\node[bag] at (1.2,1.0) {$z$};
\node[dot] at (1.5,1.2) {};

\draw (2.7,1.2) -- (2.2,0) -- (3.2,0) -- (2.7,1.2);

\node[bag] at (2.7,0.4) {$s'$};
\node[bag] at (2.4,1.0) {$y$};
\node[dot] at (2.7,1.2) {};

\node[bag] at (5.0,1.2) {$\Longrightarrow$};
\node[bag] at (5.0,1.8) {\footnotesize Horizontal};
\node[bag] at (5.0,1.5) {\footnotesize Transfer};

\draw (8.5,2.2) -- (7.2,1.2) -- (9.8,1.2) -- (8.5,2.2);

\draw (7.3,1.2) -- (6.8,0) -- (7.8,0) -- (7.3,1.2);

\node[bag] at (7.3,0.4) {$s'$};
\node[bag] at (7,1.0) {$x$};
\node[dot] at (7.3,1.2) {};

\draw (8.5,1.2) -- (8.0,0) -- (9.0,0) -- (8.5,1.2);

\node[bag] at (8.5,0.4) {$s''$};
\node[bag] at (8.2,1.0) {$z$};
\node[dot] at (8.5,1.2) {};

\draw (9.7,1.2) -- (9.2,0) -- (10.2,0) -- (9.7,1.2);

\node[bag] at (9.7,0.4) {$s'$};
\node[bag] at (9.4,1.0) {$y$};
\node[dot] at (9.7,1.2) {};

\end{tikzpicture}
\end{center}

\end{enumerate}

\bigskip 

This concludes the case where $x$ and $z$ are not related by the descendant
relationship in $t$. We are left with the case where $x$ is a descendant of
$z$ (recall that $z$ is outside $s$ and therefore not a descendant of $x$).
We reduce this problem to the previous case by considering two subcases:

\begin{iteMize}{$\bullet$}

\item If $y,z$ are not related by the descendant relationship, we use a
  $k$-guarded horizontal swap to replace $s$ by $s'$ and vice versa. This
  reverses the roles of $x$ and $y$ and as $y$ and $z$ are not related by
  the descendant relationship and position $y$ now has \type{(k+1)} $\tau$
  we can apply the previous case.

\begin{center}
\begin{tikzpicture}

\draw (2.5,2) -- (1.2,1.2) -- (3.8,1.2) -- (2.5,2);

\draw (1.7,1.2) -- (1.2,0.6) -- (2.2,0.6) -- (1.7,1.2);
\node[dot] at (1.7,1.2) {};
\node[bag] at (1.7,0.8) {$s'$};
\node[bag] at (1.4,1.0) {$y$};

\draw (3.3,1.2) -- (2.8,0.6) -- (3.8,0.6) -- (3.3,1.2);
\node[dot] at (3.3,1.2) {};
\draw (3.3,0.6) -- (2.8,0) -- (3.8,0) -- (3.3,0.6);
\node[dot] at (3.3,0.6) {};

\node[bag] at (3.3,0.2) {$s$};
\node[bag] at (3.0,1.0) {$z$};
\node[bag] at (3.0,0.4) {$x$};

\node[bag] at (4.5,1.2) {$\Longrightarrow$};
\node[bag] at (4.5,1.8) {\footnotesize Horizontal};
\node[bag] at (4.5,1.5) {\footnotesize Swap};

\begin{scope}[xshift=4cm]

\draw (2.5,2) -- (1.2,1.2) -- (3.8,1.2) -- (2.5,2);

\draw (1.7,1.2) -- (1.2,0.6) -- (2.2,0.6) -- (1.7,1.2);
\node[dot] at (1.7,1.2) {};
\node[bag] at (1.7,0.8) {$s$};
\node[bag] at (1.4,1.0) {$y$};

\draw (3.3,1.2) -- (2.8,0.6) -- (3.8,0.6) -- (3.3,1.2);
\node[dot] at (3.3,1.2) {};
\draw (3.3,0.6) -- (2.8,0) -- (3.8,0) -- (3.3,0.6);
\node[dot] at (3.3,0.6) {};

\node[bag] at (3.3,0.2) {$s'$};
\node[bag] at (3.0,1.0) {$z$};
\node[bag] at (3.0,0.4) {$x$};

\end{scope}

\node[bag] at (8.5,1.2) {$\Longrightarrow$};
\node[bag] at (8.5,1.8) {\footnotesize Previous};
\node[bag] at (8.5,1.5) {\footnotesize Case};

\begin{scope}[xshift=8cm]

\draw (2.5,2) -- (1.2,1.2) -- (3.8,1.2) -- (2.5,2);

\draw (1.7,1.2) -- (1.2,0.6) -- (2.2,0.6) -- (1.7,1.2);
\node[dot] at (1.7,1.2) {};
\node[bag] at (1.7,0.8) {$s'$};
\node[bag] at (1.4,1.0) {$x$};

\draw (3.3,1.2) -- (2.8,0.6) -- (3.8,0.6) -- (3.3,1.2);
\node[dot] at (3.3,1.2) {};
\draw (3.3,0.6) -- (2.8,0) -- (3.8,0) -- (3.3,0.6);
\node[dot] at (3.3,0.6) {};

\node[bag] at (3.3,0.2) {$s'$};
\node[bag] at (3.0,1.0) {$z$};
\node[bag] at (3.0,0.4) {$y$};

\end{scope}
\end{tikzpicture}
\end{center}

\item If $z$ is an ancestor of both $x$ and $y$ we use $k$-guarded vertical
  stutter to duplicate the context between $z$ and $x$. This introduces a new
  node $z'$ of type $\tau$ that is not related to $y$ by the descendant
  relationship and we are back in the previous case.

\end{iteMize}

\begin{center}
\begin{tikzpicture}

\draw (2.5,2) -- (1.2,1.2) -- (3.8,1.2) -- (2.5,2);

\draw (2.5,1.2) -- (1.6,0.4) -- (3.4,0.4) -- (2.5,1.2);
\node[dot] at (2.5,1.2) {};
\node[bag] at (2.5,0.7) {$D$};
\node[bag] at (2.0,1.0) {$z$};

\draw (1.7,0.4) -- (1.2,-0.8) -- (2.2,-0.8) -- (1.7,0.4);
\node[dot] at (1.7,0.4) {};
\node[bag] at (1.7,-0.4) {$s'$};
\node[bag] at (1.4,0.2) {$y$};

\draw (3.3,0.4) -- (2.8,-0.8) -- (3.8,-0.8) -- (3.3,0.4);
\node[dot] at (3.3,0.4) {};

\node[bag] at (3.3,-0.4) {$s$};
\node[bag] at (3.0,0.2) {$x$};

\node[bag] at (4.5,1.2) {$\Longrightarrow$};
\node[bag] at (4.5,1.8) {\footnotesize Vertical};
\node[bag] at (4.5,1.5) {\footnotesize Stutter};

\begin{scope}[xshift=4cm]
\draw (2.5,2) -- (1.2,1.2) -- (3.8,1.2) -- (2.5,2);

\draw (2.5,1.2) -- (1.6,0.4) -- (3.4,0.4) -- (2.5,1.2);
\node[dot] at (2.5,1.2) {};
\node[bag] at (2.5,0.7) {$D$};
\node[bag] at (2.0,1.0) {$z$};

\draw (1.7,0.4) -- (1.2,-0.8) -- (2.2,-0.8) -- (1.7,0.4);
\node[dot] at (1.7,0.4) {};
\node[bag] at (1.7,-0.4) {$s'$};
\node[bag] at (1.4,0.2) {$y$};

\end{scope}

\begin{scope}[xshift=4.8cm,yshift=-0.8cm]
\draw (2.5,1.2) -- (1.6,0.4) -- (3.4,0.4) -- (2.5,1.2);
\node[dot] at (2.5,1.2) {};
\node[bag] at (2.5,0.7) {$D$};
\node[bag] at (2.0,1.0) {$z'$};

\draw (1.7,0.4) -- (1.2,-0.8) -- (2.2,-0.8) -- (1.7,0.4);
\node[dot] at (1.7,0.4) {};
\node[bag] at (1.7,-0.4) {$s'$};
\node[bag] at (1.4,0.2) {$y'$};

\draw (3.3,0.4) -- (2.8,-0.8) -- (3.8,-0.8) -- (3.3,0.4);
\node[dot] at (3.3,0.4) {};

\node[bag] at (3.3,-0.4) {$s$};
\node[bag] at (3.0,0.2) {$x$};

\end{scope}

\node[bag] at (8.5,1.2) {$\Longrightarrow$};
\node[bag] at (8.5,1.8) {\footnotesize Previous};
\node[bag] at (8.5,1.5) {\footnotesize Case};

\begin{scope}[xshift=8cm]
\draw (2.5,2) -- (1.2,1.2) -- (3.8,1.2) -- (2.5,2);

\draw (2.5,1.2) -- (1.6,0.4) -- (3.4,0.4) -- (2.5,1.2);
\node[dot] at (2.5,1.2) {};
\node[bag] at (2.5,0.7) {$D$};
\node[bag] at (2.0,1.0) {$z$};

\draw (1.7,0.4) -- (1.2,-0.8) -- (2.2,-0.8) -- (1.7,0.4);
\node[dot] at (1.7,0.4) {};
\node[bag] at (1.7,-0.4) {$s'$};
\node[bag] at (1.4,0.2) {$y$};

\end{scope}

\begin{scope}[xshift=8.8cm,yshift=-0.8cm]
\draw (2.5,1.2) -- (1.6,0.4) -- (3.4,0.4) -- (2.5,1.2);
\node[dot] at (2.5,1.2) {};
\node[bag] at (2.5,0.7) {$D$};
\node[bag] at (2.0,1.0) {$z'$};

\draw (1.7,0.4) -- (1.2,-0.8) -- (2.2,-0.8) -- (1.7,0.4);
\node[dot] at (1.7,0.4) {};
\node[bag] at (1.7,-0.4) {$s'$};
\node[bag] at (1.4,0.2) {$y'$};

\draw (3.3,0.4) -- (2.8,-0.8) -- (3.8,-0.8) -- (3.3,0.4);
\node[dot] at (3.3,0.4) {};

\node[bag] at (3.3,-0.4) {$s'$};
\node[bag] at (3.0,0.2) {$x$};

\end{scope}

\node[bag] at (12.5,1.2) {$\Longrightarrow$};
\node[bag] at (12.5,1.8) {\footnotesize Vertical};
\node[bag] at (12.5,1.5) {\footnotesize Stutter};

\begin{scope}[xshift=12cm]
\draw (2.5,2) -- (1.2,1.2) -- (3.8,1.2) -- (2.5,2);

\draw (2.5,1.2) -- (1.6,0.4) -- (3.4,0.4) -- (2.5,1.2);
\node[dot] at (2.5,1.2) {};
\node[bag] at (2.5,0.7) {$D$};
\node[bag] at (2.0,1.0) {$z$};

\draw (1.7,0.4) -- (1.2,-0.8) -- (2.2,-0.8) -- (1.7,0.4);
\node[dot] at (1.7,0.4) {};
\node[bag] at (1.7,-0.4) {$s'$};
\node[bag] at (1.4,0.2) {$y$};

\draw (3.3,0.4) -- (2.8,-0.8) -- (3.8,-0.8) -- (3.3,0.4);
\node[dot] at (3.3,0.4) {};

\node[bag] at (3.3,-0.4) {$s'$};
\node[bag] at (3.0,0.2) {$x$};

\end{scope}

\end{tikzpicture}
\end{center}

\end{proof}

  We now turn to the proof of Lemma~\ref{claim-transfer-branch}. The proof is
  done by induction on the depth of $s'$. The idea is to replace  $s$ with $s'$
  node by node.

  Assume first that $s'$ is of depth less than $k$. Then because the \ktype of
  the roots of $s$ and $s'$ are equal, we have $s=s'$ and the result follows.

  Assume now that $s'$ is of depth greater than $k$. 

  Let $x$ be the node of $t$ corresponding to the root of $s$. Let $\tau$ be
  the \type{(k+1)} of the root of $s'$. We assume that $s'$ is a tree of the
  form $a(s'_1,s'_2)$. Notice that the \ktype of the roots of $s'_1$ and $s'_2$
  are completely determined by $\tau$. By hypothesis \lessblocks{s'}{D}{k+1},
  hence there exists a node $y$ in $D$ of type $\tau$. We consider two
  cases depending on the relationship between $x$ and $y$.

\tikzstyle{arr} = [line width=4pt, ->]
\tikzstyle{bag}=[minimum size=20pt,inner sep=0pt]
\tikzstyle{inner}=[draw,circle,inner sep=0pt]
\tikzstyle{dot}=[draw,circle,fill,minimum size=4pt,inner sep=0pt]

\begin{iteMize}{$\bullet$}
\item If $y$ is an ancestor of $x$, let $E$ be $t[y,x]$ and notice that $x$ and
  $y$ have the same \ktype. This case is depicted below. Hence applying a $k$-guarded vertical stutter we
  can duplicate $E$ obtaining the tree $DEs$. Because $L$ is \ktame, $DEs \in
  L$ iff $t=Ds \in L$. Now the root of $Es$ in $DEs$ is of type $\tau$ and therefore of
  the form $a(s_1,s_2)$ where the roots of $s_1$ and $s_2$ have the same \ktype
  as respectively the roots of $s'_1$ and $s'_2$. By construction all the
  \types{(k+1)} of $s_1$ and $s_2$ already appear in $D$ and hence we can apply
  the induction hypothesis to replace $s_1$ by $s'_1$ and $s_2$ by $s'_2$
  without affecting membership in $L$. Altogether this gives the desired
  result.
\begin{center}
\begin{tikzpicture}

\draw (0.8,2) -- (0,1) -- (1.6,1) -- (0.8,2);
\draw (0.8,1) -- (0,0) -- (1.6,0) -- (0.8,1);
\draw (0.8,0) -- (0,-1) -- (1.6,-1) -- (0.8,0);

\node[dot] at (0.8,1) {};
\node[bag] at (1.2,0.8) {$y$};
\node[dot] at (0.8,0) {};
\node[bag] at (1.2,-0.2) {$x$};

\node[bag] at (0.8,0.4) {$E$};
\node[bag] at (0.8,-0.6) {$s$};

\node[bag] (type1) at (-1.2,0.6) {\small type $\tau$};

\draw[->,shorten >=5pt, thick] (type1) -> (0.8,1);

\node[bag] at (2.5,1.1){\footnotesize Vertical};
\node[bag] at (2.5,0.8){\footnotesize Stutter};
\node[bag] at (2.5,0.5){$\Longrightarrow$};

\draw (4.2,2) -- (3.4,1) -- (5,1) -- (4.2,2);
\draw (4.2,1) -- (3.4,0) -- (5,0) -- (4.2,1);
\draw (4.2,0) -- (3.4,-1) -- (5,-1) -- (4.2,0);
\draw (4.2,-1) -- (3.4,-2) -- (5,-2) -- (4.2,-1);

\node[dot] at (4.2,1) {};
\node[bag] at (4.2,0.4) {$E$};
\node[dot] at (4.2,0) {};
\node[bag] at (4.2,-0.6) {$E$};
\node[dot] at (4.2,-1) {};
\node[bag] at (4.2,-1.6) {$s$};

\node[bag] (type2) at (2.6,-0.35) {\small type $\tau$};

\draw[->,shorten >=5pt, thick] (type2) -> (4.2,0);

\begin{scope}[xshift=3.5cm]

\node[bag] at (2.5,0.5){$=$};

\draw (4.2,2) -- (3.4,1) -- (5,1) -- (4.2,2);
\draw (4.2,1) -- (3.4,0) -- (5,0) -- (4.2,1);

\draw (3.7,-0.6) -- (3.4,-1.5) -- (4.0,-1.5) -- (3.7,-0.6);
\draw (4.7,-0.6) -- (4.4,-1.5) -- (5,-1.5) -- (4.7,-0.6);

\node[bag] at (3.7,-1.2) {$s_1$};
\node[bag] at (4.7,-1.2) {$s_2$};

\node[dot] at (4.2,1) {};
\node[bag] at (4.2,0.4) {$E$};
\node[dot] at (4.2,0) {};
\node[inner] (a) at (4.2,-0.3) {$a$};

\node[dot] at (3.7,-0.6) {};
\node[dot] at (4.7,-0.6) {};

\draw[thick] (a) -- (3.7,-0.6);
\draw[thick] (a) -- (4.7,-0.6);
\draw[thick] (a) -- (4.2,0);

\end{scope}

\begin{scope}[xshift=7cm]

\node[bag] at (2.5,0.8){\footnotesize Induction};
\node[bag] at (2.5,0.5){$\Longrightarrow$};

\draw (4.2,2) -- (3.4,1) -- (5,1) -- (4.2,2);
\draw (4.2,1) -- (3.4,0) -- (5,0) -- (4.2,1);

\draw (3.7,-0.6) -- (3.4,-1.5) -- (4.0,-1.5) -- (3.7,-0.6);
\draw (4.7,-0.6) -- (4.4,-1.5) -- (5,-1.5) -- (4.7,-0.6);

\node[bag] at (3.7,-1.2) {$s'_1$};
\node[bag] at (4.7,-1.2) {$s'_2$};

\node[dot] at (4.2,1) {};
\node[bag] at (4.2,0.4) {$E$};
\node[dot] at (4.2,0) {};
\node[inner] (a) at (4.2,-0.3) {$a$};

\node[dot] at (3.7,-0.6) {};
\node[dot] at (4.7,-0.6) {};

\draw[thick] (a) -- (3.7,-0.6);
\draw[thick] (a) -- (4.7,-0.6);
\draw[thick] (a) -- (4.2,0);

\end{scope}

\end{tikzpicture}
\end{center}

\item Assume now that $x$ and $y$ are not related by the descendant
  relationship. This case is depicted below. Let $s''$ be the subtree of $Ds$ rooted at $y$.  By hypothesis
  all the \types{(k+1)} of $s$ are already present in $D$ and the roots of $s$
  and $s''$ have the same \ktype. Hence we can apply
  Claim~\ref{claim-transfer-enhanced} and we have $Ds \in L$ iff $Ds'' \in L$.
  Now the root of $s''$ is by construction of type $\tau$. Hence $s''$ is
  of the form $a(s_1,s_2)$ where $s_1$ and $s_2$ have all their \types{(k+1)}
  appearing in $D$ and their roots have the same \ktype as respectively
  $s'_1$ and $s'_2$. Hence by induction $s_1$ can be replaced by $s'_1$ and
  $s_2$ by $s'_2$ without affecting membership in $L$. Altogether this gives
  the desired result.

\begin{center}
\begin{tikzpicture}

\draw (0.8,2) -- (-0.4,1) -- (2,1) -- (0.8,2);
\draw (0.2,1) -- (-0.3,0) -- (0.7,0) -- (0.2,1);
\draw (1.4,1) -- (0.9,0) -- (1.9,0) -- (1.4,1);

\node[dot] at (0.2,1) {};
\node[bag] at (0.5,0.8) {$y$};
\node[dot] at (1.4,1) {};
\node[bag] at (1.7,0.8) {$x$};

\node[bag] at (0.2,0.4) {$s''$};
\node[bag] at (1.4,0.4) {$s$};

\node[bag] (type1) at (-1.5,0.6) {type $\tau$};

\draw[->,shorten >=5pt, thick] (type1) -> (0.2,1);

\node[bag] at (3,1.3){\footnotesize Claim~\ref{claim-transfer-enhanced}};
\node[bag] at (3,1){$\Longrightarrow$};

\draw (5.6,2) -- (4.4,1) -- (6.8,1) -- (5.6,2);
\draw (5,1) -- (4.5,0) -- (5.5,0) -- (5,1);
\draw (6.2,1) -- (5.7,0) -- (6.7,0) -- (6.2,1);

\node[dot] at (5,1) {};
\node[bag] at (5.3,0.8) {$y$};
\node[dot] at (6.2,1) {};
\node[bag] at (6.5,0.8) {$x$};

\node[bag] at (5,0.4) {$s''$};
\node[bag] at (6.2,0.4) {$s''$};

\node[bag] (type2) at (8.5,0.6) {type $\tau$};

\draw[->,shorten >=5pt, thick] (type2) -> (5,1);
\draw[->,shorten >=5pt, thick] (type2) -> (6.2,1);

\node[bag] at (5.6,-0.5){$=$};

\begin{scope}[yshift=-3cm]

\draw (0.8,2) -- (-0.4,1) -- (2,1) -- (0.8,2);
\draw (0.2,1) -- (-0.3,0) -- (0.7,0) -- (0.2,1);

\node[dot] at (0.2,1) {};
\node[bag] at (0.5,0.8) {$y$};
\node[dot] at (1.4,1) {};
\node[bag] at (1.7,0.8) {$x$};

\node[bag] at (0.2,0.4) {$s''$};

\begin{scope} [xshift=-2.8cm,yshift=1cm]

\draw (3.7,-0.6) -- (3.4,-1.5) -- (4.0,-1.5) -- (3.7,-0.6);
\draw (4.7,-0.6) -- (4.4,-1.5) -- (5,-1.5) -- (4.7,-0.6);

\node[bag] at (3.7,-1.2) {$s'_1$};
\node[bag] at (4.7,-1.2) {$s'_2$};

\node[inner] (a) at (4.2,-0.3) {$a$};

\node[dot] at (3.7,-0.6) {};
\node[dot] at (4.7,-0.6) {};

\draw[thick] (a) -- (3.7,-0.6);
\draw[thick] (a) -- (4.7,-0.6);
\draw[thick] (a) -- (4.2,0);
\end{scope}

\node[bag] at (3,1.3){\footnotesize Induction};
\node[bag] at (3,1){$\Longleftarrow$};

\draw (5.6,2) -- (4.4,1) -- (6.8,1) -- (5.6,2);
\draw (5,1) -- (4.5,0) -- (5.5,0) -- (5,1);

\node[dot] at (5,1) {};
\node[bag] at (5.3,0.8) {$y$};
\node[dot] at (6.2,1) {};
\node[bag] at (6.5,0.8) {$x$};

\node[bag] at (5,0.4) {$s''$};

\begin{scope} [xshift=2cm,yshift=1cm]

\draw (3.7,-0.6) -- (3.4,-1.5) -- (4.0,-1.5) -- (3.7,-0.6);
\draw (4.7,-0.6) -- (4.4,-1.5) -- (5,-1.5) -- (4.7,-0.6);

\node[bag] at (3.7,-1.2) {$s_1$};
\node[bag] at (4.7,-1.2) {$s_2$};

\node[inner] (a) at (4.2,-0.3) {$a$};

\node[dot] at (3.7,-0.6) {};
\node[dot] at (4.7,-0.6) {};

\draw[thick] (a) -- (3.7,-0.6);
\draw[thick] (a) -- (4.7,-0.6);
\draw[thick] (a) -- (4.2,0);
\end{scope}
\end{scope}

\end{tikzpicture}
\end{center}
\end{iteMize}
\end{proof}

We now prove a similar result for \kloops.

\begin{lem}\label{lemma-insert-loop}
  Assume $L$ is \ktame. Let $t$ be a tree and $x$ a node of $t$ of \ktype
  $\tau$. Let $t'$ be another tree such that \sameblocks{t}{t'}{k+1} and $C$ be
  a \kloop of type $\tau$ in $t'$. Consider the tree $T$ constructed from $t$
  by inserting a copy of $C$ at $x$. Then $t \in L$ iff $T \in L$.
\end{lem}

\begin{proof} 

The proof is done in two steps. First we use the \ktame property
  of $L$ to show that we can insert a \kloop $C'$ at $x$ in $t$ such that the
  principal path of $C$ is the same as the principal path of $C'$. By this we
  mean that there is a bijection from the principal path of $C'$ to the
  principal path of $C$ that preserves the child relation and \types{(k+1)}.  In a second step we
  replace one by one the subtrees hanging from the principal path of $C'$ with
  the corresponding subtrees in $C$.

  First some terminology. Given two nodes $y,y'$ of some tree $T$, we say that
  $y'$ is a {\bf l}-ancestor of $y$ if $y$ is a descendant of the left child of
  $y'$. Similarly we define {\bf r}-ancestorship.

  Consider the context $C$ occurring in $t'$. Let $y_{0}, \cdots,y_{n}$ be the
  nodes of $t'$ on the principal path of $C$ and $\tau_{0}, \cdots,\tau_{n}$ be
  their respective \type{(k+1)}. For $0 \leq i < n$, set $c_i$ to {\bf l}
  if $y_{i+1}$ is a left child of $y_i$ and {\bf r} otherwise.

  From $t$ we construct using $k$-guarded swaps and $k$-vertical stutters a
  tree $t_1$ such that there is a sequence of nodes $x_0,\cdots,x_n$ in $t_1$
  with for all $0\leq i < n$, $x_i$ is of type $\tau_i$ and $x_i$ is an
  $c_i$-ancestor of $x_{i+1}$. The tree $t_1$ is constructed by induction on
  $n$ (note that this step do not require that $C$ is a \kloop).
  If $n=0$ then this is a consequence of \sameblocks{t}{t'}{k+1} that one
  can find in $t$ a node of type $\tau_0$.  Consider now the case $n>0$. By
  induction we have constructed from $t$ a tree $t'_1$ such that
  $x_0,\cdots,x_{n-1}$ is an appropriate sequence in $t'_1$.  By symmetry it is
  enough to consider the case where $y_{n}$ is the left child of $y_{n-1}$.
  Because all $k$-guarded operations preserve \types{(k+1)}, we have
  \sameblocks{t}{t'_1}{k+1} and hence there is a node $x'$ of $t'_1$ of type
  $\tau_n$. If $x_{n-1}$ is a {\bf l}-ancestor of $x'$ then we are done.
  Otherwise consider the left child $x''$ of $x_{n-1}$ and notice that because $y_{n}$
  is a child of $y_{n-1}$ and $x_{n-1}$ has the same \type{(k+1)} as
  $y_{n-1}$ then $x''$, $y_n$ and $x'$ have the same \ktype.

  We know that $x'$ is not a descendant of $x''$. There are two cases. If $x'$ and $x''$
  are not related by the descendant relationship then by $k$-guarded swaps we
  can replace the subtree rooted in $x''$ by the subtree rooted in $x'$ and we
  are done. If $x'$ is an ancestor of $x''$ then the context between $x'$ and $x''$
  is a \kloop and we can use $k$-guarded vertical stutter to duplicate
  it. This places a node having the same \type{(k+1)} as $x'$ as the left child of $x_{n-1}$ and we are done.

  \noindent This concludes the construction of $t_1$. From $t_1$ we construct using
  $k$-guarded swaps and $k$-guarded vertical stutter a tree $t_2$ such that
  there is a path $x_0,\cdots,x_n$ in $t_2$ with $x_i$ is of type $\tau_i$
  for all $0\leq i < n$.

  Consider the sequence $x_0,\cdots,x_n$ obtained in $t_1$ from the previous
  step. Recall that the \ktype of $x_0$ is the same as the \ktype of $x_n$.
  Hence using $k$-guarded vertical stutter we can duplicate in $t_1$ the
  context rooted in $x_0$ and whose port is $x_n$. Let $t'_1$ the resulting
  tree. We thus have two copies of the sequence $x_0,\cdots,x_n$ that we denote
  by the \emph{top copy} and the \emph{bottom copy}. Assume $x_i$ is not a
  child of $x_{i-1}$. By symmetry it is enough to consider the case where
  $x_{i-1}$ is a {\bf l}-ancestor of $x_i$. Notice then that the context
  between the left child of $x_{i-1}$ and $x_i$ is a \kloop. Using $k$-guarded
  vertical swap (see Figure~\ref{figure-construct-t2}) we can move the top copy
  of this context next to its bottom copy. Using $k$-guarded vertical stutter
  this extra copy can be removed. We are left with an instance of the initial
  sequence in the bottom copy, while in the top one $x_i$ is a child of
  $x_{i-1}$.  This construction is depicted in
  figure~\ref{figure-construct-t2}.

\tikzstyle{arr} = [line width=4pt, ->]
\tikzstyle{bag}=[minimum size=20pt,inner sep=0pt]
\tikzstyle{dot}=[draw,circle,fill,minimum size=4pt,inner sep=0pt]

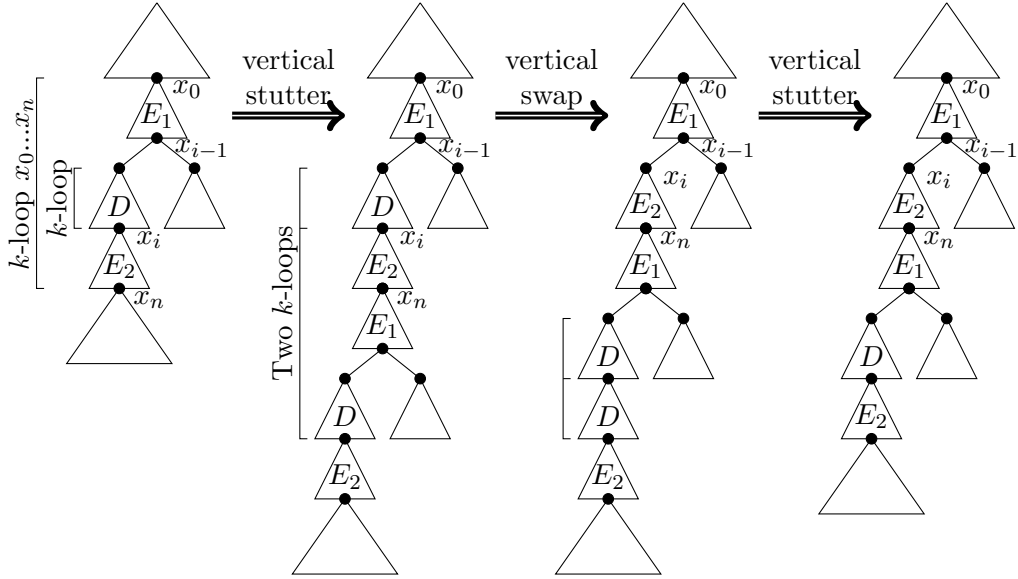
\begin{figure}
\begin{center}
\begin{tikzpicture}

%%%%%%%%%%% First Tree %%%%%%%%%%%%%%%

\draw (1.5,9.5) -- (0.8,8.5) -- (2.2,8.5) -- (1.5,9.5);

\draw (1.5,8.5) -- (1.1,7.7) -- (1.9,7.7) -- (1.5,8.5);

\draw (1.0,7.3) -- (1.4,6.5) -- (0.6,6.5) -- (1.0,7.3);
\draw (2.0,7.3) -- (2.4,6.5) -- (1.6,6.5) -- (2.0,7.3);

\draw (1.0,6.5) -- (1.4,5.7) -- (0.6,5.7) -- (1.0,6.5);

\draw (1.0,5.7) -- (1.7,4.7) -- (0.3,4.7) -- (1.0,5.7);

\node[dot] at (1.5,8.5) {};
\node[dot] at (1.5,7.7) {};
\node[dot] at (1.0,7.3) {};
\node[dot] at (2.0,7.3) {};
\node[dot] at (1.0,6.5) {};
\node[dot] at (1.0,5.7) {};

\draw (1.5,7.7) -- (1.0,7.3);
\draw (1.5,7.7) -- (2.0,7.3);

\node[bag] at (1.9,8.35) {$x_{0}$};
\node[bag] at (2.1,7.55) {$x_{i-1}$};

\node[bag] at (1.4,6.35) {$x_{i}$};
\node[bag] at (1.4,5.55) {$x_{n}$};

\node[bag] at (1.0,6.8) {$D$};
\node[bag] at (1.5,8.0) {$E_{1}$};
\node[bag] at (1.0,6.0) {$E_{2}$};

\draw (0.5,7.3) -- (0.4,7.3) -- (0.4,6.5) -- (0.5,6.5);

\draw (0.0,8.5) -- (-0.1,8.5) -- (-0.1,5.7) -- (0.0,5.7);

\node[rotate=90] at (0.2,6.9) {\kloop};

\node[rotate=90] at (-0.3,7.1) {\kloop $x_{0}...x_{n}$};

\draw[double,->,very thick] (2.5,8.0) to node[bag,sloped,above] {\begin{tabular}{c}vertical\\stutter\end{tabular}} (4.0,8.0);

%%%%%%%%%%%% Second Tree %%%%%%%%%%%

\begin{scope}[xshift=3.5cm]
\draw (1.5,9.5) -- (0.8,8.5) -- (2.2,8.5) -- (1.5,9.5);

\draw (1.5,8.5) -- (1.1,7.7) -- (1.9,7.7) -- (1.5,8.5);

\draw (1.0,7.3) -- (1.4,6.5) -- (0.6,6.5) -- (1.0,7.3);
\draw (2.0,7.3) -- (2.4,6.5) -- (1.6,6.5) -- (2.0,7.3);

\draw (1.0,6.5) -- (1.4,5.7) -- (0.6,5.7) -- (1.0,6.5);

\node[dot] at (1.5,8.5) {};
\node[dot] at (1.5,7.7) {};
\node[dot] at (1.0,7.3) {};
\node[dot] at (2.0,7.3) {};
\node[dot] at (1.0,6.5) {};
\node[dot] at (1.0,5.7) {};

\draw (1.5,7.7) -- (1.0,7.3);
\draw (1.5,7.7) -- (2.0,7.3);

\node[bag] at (1.9,8.35) {$x_{0}$};
\node[bag] at (2.1,7.55) {$x_{i-1}$};

\node[bag] at (1.4,6.35) {$x_{i}$};
\node[bag] at (1.4,5.55) {$x_{n}$};

\node[bag] at (1.0,6.8) {$D$};
\node[bag] at (1.5,8.0) {$E_{1}$};
\node[bag] at (1.0,6.0) {$E_{2}$};

\begin{scope}[xshift=-0.5cm,yshift=-2.8cm]

\draw (1.5,8.5) -- (1.1,7.7) -- (1.9,7.7) -- (1.5,8.5);

\draw (1.0,7.3) -- (1.4,6.5) -- (0.6,6.5) -- (1.0,7.3);
\draw (2.0,7.3) -- (2.4,6.5) -- (1.6,6.5) -- (2.0,7.3);

\draw (1.0,6.5) -- (1.4,5.7) -- (0.6,5.7) -- (1.0,6.5);

\draw (1.0,5.7) -- (1.7,4.7) -- (0.3,4.7) -- (1.0,5.7);

\node[dot] at (1.5,8.5) {};
\node[dot] at (1.5,7.7) {};
\node[dot] at (1.0,7.3) {};
\node[dot] at (2.0,7.3) {};
\node[dot] at (1.0,6.5) {};
\node[dot] at (1.0,5.7) {};

\draw (1.5,7.7) -- (1.0,7.3);
\draw (1.5,7.7) -- (2.0,7.3);

\node[bag] at (1.0,6.8) {$D$};
\node[bag] at (1.5,8.0) {$E_{1}$};
\node[bag] at (1.0,6.0) {$E_{2}$};

\end{scope}
\draw (0.0,7.3) -- (-0.1,7.3) -- (-0.1,3.7) -- (0.0,3.7);
\draw (0.0,6.5) -- (-0.1,6.5);

\node[rotate=90] at (-0.3,5.5) {Two \kloops};

\draw[double,->,very thick] (2.5,8.0) to node[bag,sloped,above] {\begin{tabular}{c}vertical\\swap\end{tabular}} (4.0,8.0);

\end{scope}

%%%%%%%%%%%% Third Tree %%%%%%%%%%%

\begin{scope}[xshift=7.0cm]
\draw (1.5,9.5) -- (0.8,8.5) -- (2.2,8.5) -- (1.5,9.5);

\draw (1.5,8.5) -- (1.1,7.7) -- (1.9,7.7) -- (1.5,8.5);

\draw (1.0,7.3) -- (1.4,6.5) -- (0.6,6.5) -- (1.0,7.3);
\draw (2.0,7.3) -- (2.4,6.5) -- (1.6,6.5) -- (2.0,7.3);

\node[dot] at (1.5,8.5) {};
\node[dot] at (1.5,7.7) {};
\node[dot] at (1.0,7.3) {};
\node[dot] at (2.0,7.3) {};
\node[dot] at (1.0,6.5) {};
\node[dot] at (1.0,5.7) {};

\draw (1.5,7.7) -- (1.0,7.3);
\draw (1.5,7.7) -- (2.0,7.3);

\node[bag] at (1.9,8.35) {$x_{0}$};
\node[bag] at (2.1,7.55) {$x_{i-1}$};

\node[bag] at (1.4,7.15) {$x_{i}$};
\node[bag] at (1.4,6.35) {$x_{n}$};

\node[bag] at (1.5,8.0) {$E_{1}$};
\node[bag] at (1.0,6.8) {$E_{2}$};

\begin{scope}[xshift=-0.5cm,yshift=-2.0cm]

\draw (1.5,8.5) -- (1.1,7.7) -- (1.9,7.7) -- (1.5,8.5);

\draw (1.0,7.3) -- (1.4,6.5) -- (0.6,6.5) -- (1.0,7.3);
\draw (2.0,7.3) -- (2.4,6.5) -- (1.6,6.5) -- (2.0,7.3);

\draw (1.0,6.5) -- (1.4,5.7) -- (0.6,5.7) -- (1.0,6.5);
\draw (1.0,5.7) -- (1.4,4.9) -- (0.6,4.9) -- (1.0,5.7);

\draw (1.0,4.9) -- (1.7,3.9) -- (0.3,3.9) -- (1.0,4.9);

\node[dot] at (1.5,8.5) {};
\node[dot] at (1.5,7.7) {};
\node[dot] at (1.0,7.3) {};
\node[dot] at (2.0,7.3) {};
\node[dot] at (1.0,6.5) {};
\node[dot] at (1.0,5.7) {};
\node[dot] at (1.0,4.9) {};

\draw (1.5,7.7) -- (1.0,7.3);
\draw (1.5,7.7) -- (2.0,7.3);

\draw (0.5,7.3) -- (0.4,7.3) -- (0.4,5.7) -- (0.5,5.7);
\draw (0.5,6.5) -- (0.4,6.5);

\node[bag] at (1.0,6.8) {$D$};
\node[bag] at (1.5,8.0) {$E_{1}$};
\node[bag] at (1.0,6.0) {$D$};
\node[bag] at (1.0,5.2) {$E_{2}$};

\end{scope}

\draw[double,->,very thick] (2.5,8.0) to node[bag,sloped,above] {\begin{tabular}{c}vertical\\stutter\end{tabular}} (4.0,8.0);
\end{scope}

%%%%%%%%%%%% Fourth Tree %%%%%%%%%%%

\begin{scope}[xshift=10.5cm]
\draw (1.5,9.5) -- (0.8,8.5) -- (2.2,8.5) -- (1.5,9.5);

\draw (1.5,8.5) -- (1.1,7.7) -- (1.9,7.7) -- (1.5,8.5);

\draw (1.0,7.3) -- (1.4,6.5) -- (0.6,6.5) -- (1.0,7.3);
\draw (2.0,7.3) -- (2.4,6.5) -- (1.6,6.5) -- (2.0,7.3);

\node[dot] at (1.5,8.5) {};
\node[dot] at (1.5,7.7) {};
\node[dot] at (1.0,7.3) {};
\node[dot] at (2.0,7.3) {};
\node[dot] at (1.0,6.5) {};
\node[dot] at (1.0,5.7) {};

\draw (1.5,7.7) -- (1.0,7.3);
\draw (1.5,7.7) -- (2.0,7.3);

\node[bag] at (1.9,8.35) {$x_{0}$};
\node[bag] at (2.1,7.55) {$x_{i-1}$};

\node[bag] at (1.4,7.15) {$x_{i}$};
\node[bag] at (1.4,6.35) {$x_{n}$};

\node[bag] at (1.5,8.0) {$E_{1}$};
\node[bag] at (1.0,6.8) {$E_{2}$};

\begin{scope}[xshift=-0.5cm,yshift=-2.0cm]

\draw (1.5,8.5) -- (1.1,7.7) -- (1.9,7.7) -- (1.5,8.5);

\draw (1.0,7.3) -- (1.4,6.5) -- (0.6,6.5) -- (1.0,7.3);
\draw (2.0,7.3) -- (2.4,6.5) -- (1.6,6.5) -- (2.0,7.3);

\draw (1.0,6.5) -- (1.4,5.7) -- (0.6,5.7) -- (1.0,6.5);

\draw (1.0,5.7) -- (1.7,4.7) -- (0.3,4.7) -- (1.0,5.7);

\node[dot] at (1.5,8.5) {};
\node[dot] at (1.5,7.7) {};
\node[dot] at (1.0,7.3) {};
\node[dot] at (2.0,7.3) {};
\node[dot] at (1.0,6.5) {};
\node[dot] at (1.0,5.7) {};

\draw (1.5,7.7) -- (1.0,7.3);
\draw (1.5,7.7) -- (2.0,7.3);

\node[bag] at (1.0,6.8) {$D$};
\node[bag] at (1.5,8.0) {$E_{1}$};
\node[bag] at (1.0,6.0) {$E_2$};

\end{scope}

\end{scope}

\end{tikzpicture}
\end{center}
\caption{The construction of $t_2$, eliminating the context $D$ between
  $x_{i-1}$ and $x_i$}\label{figure-construct-t2}
\end{figure}

 Repeating this argument yields the desired
  tree $t_2$.
 
Consider now the context $C'=t_2[x_0,x_n]$. It is a loop of \ktype $\tau$. Let
$T'$ be the tree constructed from $t$ by inserting $C'$ at $x$. 

\begin{claim} \label{claim-reverse-swaps}
$T' \in L$ iff $t\in L$.
\end{claim}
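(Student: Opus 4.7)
Since $t_2$ was built from $t$ by a sequence of $k$-guarded operations, \ktameness immediately gives $t \in L \iff t_2 \in L$, so it suffices to show $T' \in L \iff t_2 \in L$. The plan is to replay, on $T'$, the same sequence of $k$-guarded operations that was used to construct $t_2$ from $t$, thereby obtaining a tree $T'_2$ with $T' \in L \iff T'_2 \in L$, and then to compare $T'_2$ with $t_2$ directly.

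The replay is legitimate because the insertion of $C'$---a \kloop of type $\tau$---at a node $x$ of the same type $\tau$ does not affect the \ktype of any node of $t$: the depth-$k$ unfolding below $x$ in $t$ and below the root of the inserted $C'$ in $T'$ both coincide with the neighborhood encoded in $\tau$, so no existing node can tell that $C'$ was inserted. Consequently, every $k$-guarded operation used in the construction of $t_2$ remains $k$-guarded when performed on the corresponding nodes in $T'$, and the required structural conditions (equal contexts, aligned same-type nodes) are preserved step by step. By \ktameness of $L$, $T' \in L \iff T'_2 \in L$.

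It remains to relate $T'_2$ to $t_2$. By choosing the base case $x_0 = x$ in the principal-path construction (permissible since $x$ has the required type $\tau_0 = \tau$), the image of $x$ through the replay ends exactly at $x_0$ in $t_2$, so $T'_2$ is just $t_2$ with an extra copy of $C'$ inserted at $x_0$. This yields three aligned type-$\tau$ nodes---the new root of the inserted $C'$, then $x_0$, then $x_n$---with the same context $C'$ between consecutive pairs. A single $k$-guarded vertical stutter at this triple collapses the extra copy and produces exactly $t_2$, so $T'_2 \in L \iff t_2 \in L$. Chaining gives $T' \in L \iff t \in L$.

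The main obstacle is to verify that the replay is coherent when operations of the $t \to t_2$ construction involve nodes on both sides of $x$ (nodes above $x$ in the context $D$ and nodes inside $s = \subtree{t}{x}$), because in $T'$ the \kloop $C'$ sits between them. This is handled by carefully tracking the image of $x$ after each operation and by systematically invoking the type-preservation property of \kloop insertion at matching-type nodes, ensuring that contexts which coincide in the $t$-branch continue to coincide in the $T'$-branch---up to translation through the inserted $C'$, whose \kloop structure makes it invisible to the $k$-guarded operations involved.
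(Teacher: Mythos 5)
Your overall strategy coincides with the paper's: replay on $T'$ the sequence of $k$-guarded operations that produced $t_2$ from $t$, and then compare the resulting tree with $t_2$. However, two of your steps do not hold up. First, you cannot anchor the principal-path construction at $x$ by ``choosing $x_0=x$'': the node $x_0$ must carry the \type{(k+1)} $\tau_0$ of the first node of the principal path of $C$, whereas the hypothesis only guarantees that $x$ has \ktype $\tau$, which is the restriction of $\tau_0$ to depth $k$. These two types need not agree at depth $k+1$, so $x$ is in general not an admissible choice for $x_0$, and the identity ``$\tau_0=\tau$'' you invoke conflates a \type{(k+1)} with a \ktype.

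Second, and more seriously, the conclusion that $T'_2$ is $t_2$ with exactly one extra copy of $C'$ sitting immediately above $x_0$ is unjustified and false in general. The construction of $t_2$ uses vertical stutter in the duplicating direction in several places (to duplicate the \kloop between $x'$ and $x''$ when building $t_1$, and to create the top and bottom copies of the context $t_1[x_0,x_n]$ when building $t_2$); whenever the duplicated context contains the inserted occurrence of $C'$, that occurrence is duplicated as well. Hence the replayed sequence yields $t_2$ together with possibly \emph{several} extra copies of $C'$ scattered through the tree, not one copy conveniently adjacent to the occurrence $t_2[x_0,x_n]$. The missing argument, which the paper supplies, is that because $C'$ is a \kloop all the roots and ports of these copies have the same \ktype, so $k$-guarded vertical and horizontal swaps can bring the copies next to one another, after which $k$-guarded vertical stutters eliminate them one by one until exactly $t_2$ remains. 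Without this gathering-and-elimination step, the equivalence $T'\in L$ iff $t_2\in L$ is not established, so your proof has a genuine gap at its final stage.
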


\begin{proof}
  Consider the sequence of $k$-guarded swaps and $k$-guarded vertical stutter
  that was used in order to obtain $t_2$ from $t$. Because $L$ is \ktame, $t
  \in L$ iff $t_2 \in L$.

  We can easily identify the nodes of $t$ with the nodes of $T'$ outside of
  $C'$. Consider the same sequence of $k$-guarded operations applied to $T'$.
  Observe that this yields a tree $T_2$ corresponding to $t_2$ with possibly several
  extra copies of $C'$. As $C'$ is a \kloop, each of the roots and the ports of these
  extra copies have the same \ktype. Hence, using appropriate vertical $k$-swaps or
  appropriate horizontal $k$-swaps, depending on whether two copies are related or
  not by the descendant relation, they can be brought together. Two examples of
  such operation is given in Figure~\ref{figure-elim-loops}.

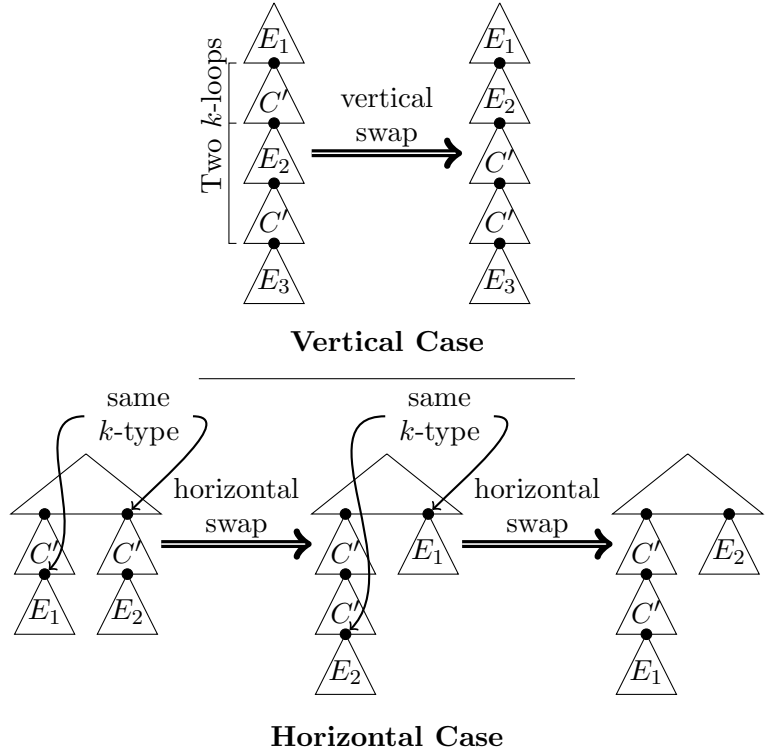
\begin{figure}
\begin{center}
\begin{tikzpicture}

%%%%%%%%%%% First Tree %%%%%%%%%%%%%%%

\draw (1.5,8.5) -- (1.1,7.7) -- (1.9,7.7) -- (1.5,8.5);
\draw (1.5,7.7) -- (1.1,6.9) -- (1.9,6.9) -- (1.5,7.7);
\draw (1.5,6.9) -- (1.1,6.1) -- (1.9,6.1) -- (1.5,6.9);
\draw (1.5,6.1) -- (1.1,5.3) -- (1.9,5.3) -- (1.5,6.1);
\draw (1.5,5.3) -- (1.1,4.5) -- (1.9,4.5) -- (1.5,5.3);

\node[dot] at (1.5,7.7) {};
\node[dot] at (1.5,6.9) {};
\node[dot] at (1.5,6.1) {};
\node[dot] at (1.5,5.3) {};

\node[bag] at (1.5,8.0) {$E_1$};
\node[bag] at (1.5,7.2) {$C'$};
\node[bag] at (1.5,6.4) {$E_{2}$};
\node[bag] at (1.5,5.6) {$C'$};
\node[bag] at (1.5,4.8) {$E_{3}$};

\draw (1.0,7.7) -- (0.9,7.7) -- (0.9,5.3) -- (1.0,5.3);
\draw (1.0,6.9) -- (0.9,6.9);

\node[rotate=90] at (0.7,7.0) {Two \kloops};

\draw[double,->,very thick] (2.0,6.5) to node[bag,sloped,above] {\begin{tabular}{c}vertical\\swap\end{tabular}} (4.0,6.5);

%%%%%%%%%%% Second Tree %%%%%%%%%%%%%%%

\begin{scope}[xshift=3cm]
\draw (1.5,8.5) -- (1.1,7.7) -- (1.9,7.7) -- (1.5,8.5);
\draw (1.5,7.7) -- (1.1,6.9) -- (1.9,6.9) -- (1.5,7.7);
\draw (1.5,6.9) -- (1.1,6.1) -- (1.9,6.1) -- (1.5,6.9);
\draw (1.5,6.1) -- (1.1,5.3) -- (1.9,5.3) -- (1.5,6.1);
\draw (1.5,5.3) -- (1.1,4.5) -- (1.9,4.5) -- (1.5,5.3);

\node[dot] at (1.5,7.7) {};
\node[dot] at (1.5,6.9) {};
\node[dot] at (1.5,6.1) {};
\node[dot] at (1.5,5.3) {};

\node[bag] at (1.5,8.0) {$E_1$};
\node[bag] at (1.5,7.2) {$E_2$};
\node[bag] at (1.5,6.4) {$C'$};
\node[bag] at (1.5,5.6) {$C'$};
\node[bag] at (1.5,4.8) {$E_{3}$};

\end{scope}

\node[bag] at (3.0,4.0) {\bf Vertical Case};

\draw (0.5,3.5) -- (5.5,3.5);

\begin{scope}[xshift=-2.5cm,yshift=-1cm]

\draw (1.5,3.5) -- (0.5,2.7) -- (2.5,2.7) -- (1.5,3.5);

\draw (0.95,2.7) -- (0.55,1.9) -- (1.35,1.9) -- (0.95,2.7);
\draw (0.95,1.9) -- (0.55,1.1) -- (1.35,1.1) -- (0.95,1.9);

\draw (2.05,2.7) -- (1.65,1.9) -- (2.45,1.9) -- (2.05,2.7);
\draw (2.05,1.9) -- (1.65,1.1) -- (2.45,1.1) -- (2.05,1.9);

\node[dot] at (0.95,2.7) {};
\node[dot] (s1) at (0.95,1.9) {};
\node[dot] (s2) at (2.05,2.7) {};
\node[dot] at (2.05,1.9) {};

\node[bag] at (0.95,2.2) {$C'$};
\node[bag] at (0.95,1.4) {$E_1$};
\node[bag] at (2.05,2.2) {$C'$};
\node[bag] at (2.05,1.4) {$E_2$};

\node[bag] (id) at (2.2,4.0) {\begin{tabular}{c}same\\ \ktype \end{tabular}};

\draw[->,thick] (id) to [out=180,in=45] (s1);
\draw[->,thick] (id) to [out=0,in=45] (s2);

\draw[double,->,very thick] (2.5,2.3) to node[bag,sloped,above] {\begin{tabular}{c}horizontal\\swap\end{tabular}} (4.5,2.3);

\end{scope}

\begin{scope}[xshift=1.5cm,yshift=-1cm]

\draw (1.5,3.5) -- (0.5,2.7) -- (2.5,2.7) -- (1.5,3.5);

\draw (0.95,2.7) -- (0.55,1.9) -- (1.35,1.9) -- (0.95,2.7);
\draw (0.95,1.9) -- (0.55,1.1) -- (1.35,1.1) -- (0.95,1.9);
\draw (0.95,1.1) -- (0.55,0.3) -- (1.35,0.3) -- (0.95,1.1);

\draw (2.05,2.7) -- (1.65,1.9) -- (2.45,1.9) -- (2.05,2.7);

\node[dot] at (0.95,2.7) {};
\node[dot] at (0.95,1.9) {};
\node[dot] (s2) at (2.05,2.7) {};
\node[dot] (s1) at (0.95,1.1) {};

\node[bag] at (0.95,2.2) {$C'$};
\node[bag] at (0.95,1.4) {$C'$};
\node[bag] at (2.05,2.2) {$E_1$};
\node[bag] at (0.95,0.6) {$E_2$};

\node[bag] (id) at (2.2,4.0) {\begin{tabular}{c}same\\ \ktype \end{tabular}};

\draw[->,thick] (id) to [out=180,in=45] (s1);
\draw[->,thick] (id) to [out=0,in=45] (s2);

\draw[double,->,very thick] (2.5,2.3) to node[bag,sloped,above] {\begin{tabular}{c}horizontal\\swap\end{tabular}} (4.5,2.3);

\end{scope}

\begin{scope}[xshift=5.5cm,yshift=-1cm]

\draw (1.5,3.5) -- (0.5,2.7) -- (2.5,2.7) -- (1.5,3.5);

\draw (0.95,2.7) -- (0.55,1.9) -- (1.35,1.9) -- (0.95,2.7);
\draw (0.95,1.9) -- (0.55,1.1) -- (1.35,1.1) -- (0.95,1.9);
\draw (0.95,1.1) -- (0.55,0.3) -- (1.35,0.3) -- (0.95,1.1);

\draw (2.05,2.7) -- (1.65,1.9) -- (2.45,1.9) -- (2.05,2.7);

\node[dot] at (0.95,2.7) {};
\node[dot] at (0.95,1.9) {};
\node[dot] at (2.05,2.7) {};
\node[dot] at (0.95,1.1) {};

\node[bag] at (0.95,2.2) {$C'$};
\node[bag] at (0.95,1.4) {$C'$};
\node[bag] at (2.05,2.2) {$E_2$};
\node[bag] at (0.95,0.6) {$E_1$};

\end{scope}

\node[bag] at (3.0,-1.25) {\bf Horizontal Case};

\end{tikzpicture}
\end{center}
\caption{Bringing copies of the \kloop $C'$ together in
Claim~\ref{claim-reverse-swaps}}\label{figure-elim-loops}
\end{figure}

Then, using $k$-guarded vertical stutter all but one copy can be eliminated
resulting in $t_2$. Hence $T' \in L$ iff $t_2\in L$ and the claim is
proved. See figure \ref{figure-relat-t2}.
\end{proof}

\begin{figure}
\begin{center}
\psset{unit=.8cm}
\begin{pspicture}(8,6.6)

\rput(0.0,1.0){
\pspolygon(1.5,5.6)(0.2,3.3)(2.8,3.3)
\rput(1.5,4.3){$t$}
}
\pspolygon(1.5,2.3)(0.2,0)(2.8,0)
\rput(1.5,1.0){$T'$}

\rput(0.0,1.0){
\rput(4.0,4.4){$\Longrightarrow$}
\rput(4.0,5.2){$k$- guarded}
\rput(4.0,4.8){operations}
}

\rput(4.0,1.1){$\Longrightarrow$}
\rput(4.0,1.9){$k$- guarded}
\rput(4.0,1.5){operations}

\rput(1.0,1.0){

\pspolygon(5.5,5.6)(4.2,3.3)(6.8,3.3)
\rput(5.5,4.3){$t_{2}$}

}

\rput(1.0,0.0){
\pspolygon(5.5,2.3)(4.2,0)(6.8,0)
\rput(5.5,1.0){$T_{2}$}

}

\rput(1.0,0.5){

\rput(5.5,2.8){\rotateleft{$\Longrightarrow$}}
\rput(6.8,3.2){deletion}
\rput(6.8,2.8){of extra}
\rput(6.8,2.4){copies of $C'$}

}

\end{pspicture}
\end{center}
\caption{Relation with $t_2$}\label{figure-relat-t2}
\end{figure}
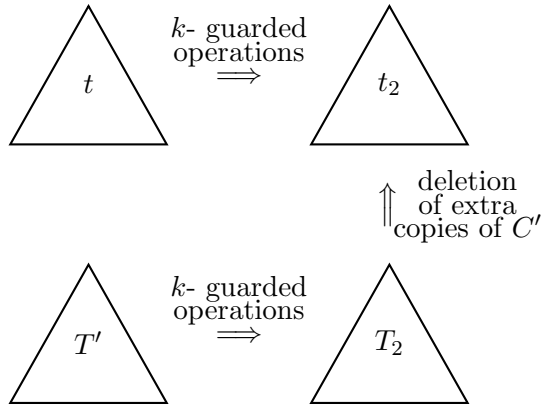

It remains to show that $T' \in L$ iff $T \in L$. By construction of $T'$ we
have \lessblocks{C'}{t}{k+1}. Consider now a node $x_i$ in the principal path
of $C'$. Let $T_i$ be the subtree branching out the principal path of $C$ at
$y_i$ and $T'_i$ be the subtree branching out the principal path of $C'$ at
$x_i$. By construction $x_i$ and $y_i$ are of \type{(k+1)} $\tau_i$. Therefore
the roots of $T_i$ and $T'_i$ have the same \ktype. Because
\lessblocks{C'}{t}{k+1} all the \types{(k+1)} of $T'_i$ already appear in the
part of $T'$ outside of $C'$. By hypothesis we also have \lessblocks{T_i}{t}{k+1}.
Hence we can apply Lemma~\ref{claim-transfer-branch} and replacing $T'_i$
with $T_i$ does not affect membership in $L$. A repeated use of that
lemma eventually shows that $T' \in L$ iff $T \in L$.
\end{proof}%lemma-insert-loop

\medskip

We return to the proof of Proposition~\ref{lemma-pumping}. Recall that
we have two trees $t,t'$ such that \sameblocks{t}{t'}{\kappa} for $\kappa=
\beta_k + k + 1$. For $l > \kappa$, we want to construct $T,T'$ such that:
 
\begin{enumerate}[(1)]
\item $t \in L$ iff $T \in L$
\item $t' \in L$ iff $T' \in L$
\item \sameblocks{T}{T'}{l}
\end{enumerate}

Recall that the number of \ktypes is $\beta_k$. Therefore, by choice of $\kappa$, in
every branch of a \type{\kappa} one can find at least one \ktype that is repeated.
This provides many \kloops that can be used using Lemma~\ref{lemma-insert-loop}
for obtaining bigger types.

Take $l > \kappa$, we build $T$ and $T'$ from $t$ and $t'$ by inserting \kloops
in $t$ and $t'$ without affecting their membership in $L$ using
Lemma~\ref{lemma-insert-loop}.

  Let $B = \{\tau_{0},...,\tau_{n}\}$ be the set of \ktypes $\tau$ such that
  there is a loop of \ktype $\tau$ in $t$ or in $t'$. For each $\tau \in B$ we
  fix a context $C_\tau$ as follows. Because $\tau \in B$ there is a context
  $C$ in $t$ or $t'$ that is a loop of \ktype $\tau$. For each $\tau \in B$, we
  fix arbitrarily such a $C$ and set $C_\tau$ as $\underbrace{C \cdot \ldots
    \cdot C}_{l}$, $l$ concatenations of the context $C$. Notice that the path
  from the root of $C_\tau$ to its port is then bigger than $l$.

  We now describe the construction of $T$ from $t$. The construction of $T'$
  from $t'$ is done similarly. The tree $T$ is constructed by simultaneously inserting,
  for all $\tau \in B$, a copy of the context $C_\tau$ at all nodes of $t$ of type $\tau$.

  We now show that $T$ and $T'$ have the desired properties. 
  
  The first and second properties, $t \in L$ iff $T \in L$ and $t' \in L$ iff
  $T' \in L$, essentially follow from Lemma~\ref{lemma-insert-loop}. We only
  show that $t \in L$ iff $T \in L$, the second property is proved
  symmetrically. We view $T$ as if it was constructed from $t$ using a sequence
  of insertions of some context $C_\tau$ for $\tau \in B$. We write
  $s_0,...,s_m$ the sequence of intermediate trees with $s_0=t$ and $s_m=T$. We
  call $C_i$ the context inserted to get $s_{i+1}$ from $s_{i}$.  We show by
  induction on $i$ that (i) \sameblocks{s_i}{t}{k+1} and (ii) $s_i \in L$ iff
  $s_{i+1}\in L$. This will imply $t \in L$ iff $T \in L$ as desired.  (i) is
  clear for $i=0$. We show that for all $i$ (i) implies (ii). Recall that $C_i$
  is the concatenation of $l$ copies of a \kloop present either in $t$ or in
  $t'$.  We suppose without generality that the \kloop is present in $t$.  Let
  $s$ be the tree constructed from $t$ by duplicating the \kloop $l$
  times. Hence $s$ is a tree containing $C_i$ and by construction
  \sameblocks{s}{t}{k+1}. Because \sameblocks{t}{t'}{\kappa} with $\kappa >
  k+1$ and \sameblocks{s_i}{t}{k+1} we have \sameblocks{s}{s_i}{k+1}. By
  Lemma~\ref{lemma-insert-loop} this implies that $s_{i+1} \in L$ iff $s_{i}
  \in L$. By construction we also have \sameblocks{s_{i+1}}{s_i}{k+1} and the
  induction step is proved.

We now show the third property:

  \begin{lem}\label{claim-sameblock}
    \sameblocks{T}{T'}{l}
  \end{lem}

\proof
  We need to show that \lessblocks{T}{T'}{l}, \lessblocks{T'}{T}{l} and that
  the roots of $T$ and $T'$ have the same \type{l}. It will be convenient for
  proving this to view the nodes of $T$ as the union of the nodes of $t$ plus
  some nodes coming from the \kloops that were inserted. To do this more
  formally, if $x$ is a node of $t$ of \ktype not in $B$, then $x$ is
  identified with the corresponding node of $T$. If $x$ is a node of $t$ whose
  \ktype is in $B$ then $x$ is identified in $T$ with the port of the copy of
  $C_\tau$ that was inserted at node $x$.  We start with the following claim.

  \begin{claim} \label{claim-identify-types} Take two nodes $x$ in $t$ and $x'$
    in $t'$, such that $x$ and $x'$ have the same \type{\kappa}. Let $y$ and $y'$ be
    the corresponding nodes in $T$ and $T'$. Then $y$ and $y'$ have the same
    \type{l}.
\end{claim}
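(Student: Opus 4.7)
My plan is to construct an explicit isomorphism between the $l$-ball around $y$ in $T$ and the $l$-ball around $y'$ in $T'$, which is exactly what is required for $y$ and $y'$ to share the same \type{l}. The construction of $T$ and $T'$ is symmetric: the same fixed contexts $C_\tau$ (for $\tau \in B$) are inserted above every node of \ktype $\tau$. This means the isomorphism should be pieced together from two ingredients: the hypothesis that $x$ and $x'$ have the same \type{\kappa} in $t$ and $t'$, and the fact that each inserted $C_\tau$ is independent of whether we started from $t$ or $t'$. I will call a node of $T$ \emph{old} if it is identified with a node of $t$ and \emph{new} if it lies inside some inserted copy of a $C_\tau$, and similarly for $T'$.

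The central technical step is a distance bound: I will show that every old node $z$ lying in the $l$-ball of $y$ in $T$ corresponds to a node of $t$ at $t$-distance at most $\beta_k$ from $x$. The argument rests on two facts. First, each inserted $C_\tau$ has principal path of length at least $l$, so if the $t$-path from $x$ to $z$ contains a node of \ktype in $B$, then the corresponding $T$-path must fully traverse at least one such context, paying a cost of at least $l-1$; against a total budget of $l$, this forces $d_t(x,z)$ to be very small (at most $1$). Second, if no node on the $t$-path from $x$ to $z$ has \ktype in $B$, then no two nodes on that path can share a \ktype, because any such repetition would produce a loop whose type would by definition be in $B$; the pigeonhole principle then bounds the number of nodes on the path by $\beta_k$.

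Once this bound is established, the remainder is routine. Since $\beta_k = \kappa - k - 1$, every old node in the $l$-ball of $y$ lies well inside the \type{\kappa} ball of $x$, and in particular its \ktype is fully determined by the \type{\kappa} of $x$. The hypothesis \sameblocks{x}{x'}{\kappa} then yields a label- and structure-preserving bijection between the old nodes of $T$ within $l$ of $y$ and those of $T'$ within $l$ of $y'$, and in addition guarantees that at corresponding positions the same context $C_\tau$ has been inserted. Extending the bijection identically on the new nodes of each matched copy of $C_\tau$ produces the desired isomorphism, and thus $y$ and $y'$ have the same \type{l}.

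The main obstacle I anticipate is bookkeeping in the distance bound: one has to consider paths that descend from $y$, paths that ascend from $y$, and paths that cross a lowest common ancestor, and in each case account for the extra cost paid to traverse $C_{\tau_x}$ upward when $\tau_x \in B$ (whose principal path of length at least $l$ in fact prevents any $t$-ancestor of $x$ from being reachable within $l$ steps of $y$). A secondary point is that within a matched copy of $C_\tau$ one must also account for subtrees branching off the principal path, but since $C_\tau$ is a fixed context independent of $t$ versus $t'$ these subtrees contribute identically to both sides and require no further work.
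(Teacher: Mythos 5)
Your core argument is essentially the paper's: both rest on the pigeonhole over the $\beta_k$ possible \ktypes (the paper uses it to find, on every branch of the \type{\kappa} $\nu$ and within the region determined by $\nu$, a repeated \ktype and hence a loop whose type lies in $B$; you use it to bound by $\beta_k$ the $t$-distance from $x$ of any old node still reachable within $l$ steps in $T$), combined with the fact that each $C_\tau$ is one fixed context of depth greater than $l$ inserted at positions determined by $\nu$, hence at matching positions below $y$ and $y'$. One correction, though: in this paper the \type{l} of a node is the restriction of the \emph{subtree rooted at that node} to depth $l$ --- descendants only --- so the cases you anticipate for ascending paths and for paths crossing a lowest common ancestor do not arise. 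This is not just a matter of economy: your proposed resolution for the ascending case (that the context $C_{\tau_x}$ inserted above $x$ blocks access to the $t$-ancestors of $x$) only applies when the \ktype of $x$ is in $B$; when it is not, nothing is inserted above $x$, the ancestors of $x$ would lie inside a two-sided $l$-neighborhood of $y$, and the hypothesis that $x$ and $x'$ have the same \type{\kappa} says nothing about ancestors --- so the two-sided variant you are gearing up to prove is not the claim and your argument for it would not close. Restricted to descendants, your distance bound and the resulting isomorphism are correct; the only blemish is the ``(at most $1$)'' in your first case, where the right conclusion is that such a $z$ does not belong to the $l$-neighborhood at all (traversing the inserted context already costs more than $l$), which only strengthens the bound you need.
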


\begin{proof}
  Let $\nu$ the \type{\kappa} of $x$ and $x'$.  Consider a branch of $\nu$ of
  length $\kappa$. By the choice of $\kappa$ we know that in this branch one
  can find two nodes $z$ and $z'$ with the same \ktypes $\tau$, with $z$ an
  ancestor of $z'$ and such that the \ktype $\tau$ of $z$ is determined by
  $\nu$ ($z$ is at distance $\geq k$ from the leaves of $\nu$). Hence $\tau$ is
  in $B$.  Note that because the \ktype of $z$ is included in $\nu$, the
  presence of a node of type $\nu$ induces the presence of a node of type
  $\tau$ at the same relative position than $z$. Hence a copy of $C_\tau$ is
  inserted simultaneously at the same position relative to $y$ and $y'$ during the
  construction of $T$ and $T'$.  Because this is true for all branches of $\nu$
  and because all $C_\tau$ have depth at least $l$, then $y$ and $y'$ have the
  same \type{l}.
\end{proof}

From claim~\ref{claim-identify-types} it follows that the roots of $T$ and $T'$
have the same \type{l}. By symmetry we only need to show that
\lessblocks{T}{T'}{l}. Let $y$ be a node of $T$ and $\mu$ be its \type{l}. We
show that there exists $y' \in T'$ with type $\mu$. We consider two cases:

\begin{iteMize}{$\bullet$}
\item $y$ is not a node of a loop inserted during the construction of $T$. Let
  $x$ be the corresponding position in $t$ and let $\nu$ be its \type{\kappa}. Since
  \sameblocks{t}{t'}{\kappa}, there is a node $x'$ of $t'$ of type $\nu$. Let $y'$
  be the node of $T'$ corresponding to $y'$. By
  Claim~\ref{claim-identify-types} $y$ and $y'$ have the same \type{l}.

\item $y$ is a node inside a copy of $C_\tau$ inserted to construct $T$. Let $x$ be the
  node of $t$ where this loop was inserted. Let $\nu$ be the \type{\kappa} of $x$ (the \ktype of
  $x$ is $\tau$).  Since \sameblocks{t}{t'}{\kappa}, there is a node $x'$ of $t'$ of
  type $\nu$. Since $\kappa > k$, $x$ and $x'$ have the same \type{k}, a
  copy of $C_\tau$ was also inserted in $t'$ at position $x'$ during the
  construction of $T'$. From Claim~\ref{claim-identify-types}, $x$ and
  $x'$, when viewed as nodes of $T$ and $T'$ have the same \type{l}.  Let $y'$ be the node of $T'$ in the copy of
  $C_\tau$ inserted at $x'$ that corresponds to the position $y$. Since $y$ and
  $y'$ are ancestors of $x$ and $x'$ that have the same \type{l}, and since the
  context from $y$ to $x$ is the same as the context from $y'$ to $x'$, then $y$ and $y'$ must have the same
  \type{l}.\qed
\end{iteMize}

\noindent This concludes the proof of Proposition~\ref{lemma-pumping}.
\end{proof}

%%% Local Variables: 
%%% mode: latex
%%% TeX-master: "main"
%%% ispell-local-dictionary: "american"
%%% End: 

\section{Unranked trees}\label{section-unranked}

\newcommand\ltype[1]{$(#1,l)$-type\xspace}
\newcommand\ltypes[1]{$(#1,l)$-types\xspace}
\newcommand\kltype{\ltype{k}}
\newcommand\kltypes{\ltypes{k}}

In this section we consider unranked unordered trees with labels in $\Sigma$.
In such trees, each node may have an arbitrary number of children but no order is
assumed on these children. In particular even if a node has only two children
we can not necessarily distinguish the left child from the right child.

Our goal is to adapt the result of the previous section and provide a decidable
characterization of locally testable languages of unranked unordered trees.

In this section by \emph{regular language} we mean definable in the logic MSO
using only the child predicate and unary predicates for the labels of the
nodes. There is also an equivalent automata model that we briefly describe
next. A tree automaton $A$ over unordered unranked trees consists essentially
of a finite set of states $Q=\{q_1,\cdots,q_k\}$, an integer $m$ denoted as the
\emph{counter threshold} in the sequel, and a transition function $\delta$
associating a unique state to any pair consisting of a label and a tuple
$(q_1, \gamma_1) \cdots (q_k,\gamma_k)$ where $\gamma_i \in \{=i~|~ i < m \}
\cup \{ \geq m \}$. The meaning is straightforward via bottom-up evaluation: A
node of label {\bf a} get assigned a state $q$ if for all $i$, the number of
its children, up to threshold $m$, that were assigned state $q_i$ is as
specified by $\delta$. In the sequel we assume without loss of generality that
all our tree automata are deterministic.

In the unranked tree case, there are several natural definitions of LT.  Recall
the definition of \ktype: the \ktype of a node $x$ is the isomorphism type of
the subtree induced by the descendants of $x$ at distance at most $k$ from
$x$.  With unranked trees this definition generates infinitely many \ktypes.
We therefore introduce a more flexible notion of type, \kltype, based on one
extra parameter $l$ restricting the horizontal information. It is defined by
induction on $k$.  Consider an unordered tree $t$ and a node $x$ of $t$.  For
$k=0$, the \kltype of $x$ is just the label of $x$. For $k>0$ the \kltype of
$x$ is the label of $x$ together with, for each \ltype{k-1}, the number, up to
threshold $l$, of children of $x$ of this type. The reader can verify that over
binary trees, the $(k,2)$-type and the \ktype of $x$ always coincide. As in the
previous section we say that two trees are $(k,l)$-equivalent, and denote this
using $\simeq_{(k,l)}$, if they have the
same occurrences of \kltypes and their roots have the same \kltype. We also use
\lessblocks{t}{t'}{(k,l)} to denote the fact that all \kltypes of $t$ also occur in $t'$.

Based on this new notion of type, we define two notions of locally testable
languages.  The most expressive one, denoted ALT (A for \emph{Aperiodic}), is
defined as follows. A language $L$ is in $(\kappa,\lambda)$-ALT if it is a union of
$(\kappa,\lambda)$-equivalence classes. A language $L$ is in ALT if there is a $\kappa$ and a $\lambda$ such
that $L$ is in $(\kappa,\lambda)$-ALT.

The second one, denoted ILT in the sequel (I for \emph{Idempotent}), assumes
$\lambda=1$: A language $L$ is in ILT if there is a $\kappa$ such that $L$ is a union of
$(\kappa,1)$-equivalence classes.

The main result of this section is that we can decide
membership for both ILT and ALT.

\begin{thm}\label{theo-unranked}
It is decidable whether a regular unranked unordered tree language is ILT.
It is decidable whether a regular unranked unordered tree language is ALT.
\end{thm}

\paragraph{\bf Tameness}
The notion of \ktame is defined as in Section~\ref{section-necessary} using the
same $k$-guarded operations requiring that the swapped nodes have identical
\ktype. We also define a notion of \kltame which corresponds to our new notion
of \kltype. Consider the four operations of tameness defined in
Section~\ref{section-necessary}. A horizontal swap is said to be
$(k,l)$-guarded if $x$ and $x'$ have the same \kltype, a horizontal transfer is
$(k,l)$-guarded if $x,y,z$ have the same \kltype, a vertical swap is
$(k,l)$-guarded if $x,y,z$ have the same \kltype and a vertical stutter is
$(k,l)$-guarded if $x,y,z$ have the same \kltype. Let $L$ be a regular unranked
unordered tree language and let $m$ be the counting threshold of the minimal
automaton recognizing $L$, we say that $L$ is \kltame iff it is closed under
$(k,l)$-guarded horizontal swap, horizontal transfer, vertical swap and
vertical stutter and $l > m$ (we assume $l>m$ in order to make the
  statements of the results similar to those used in the binary setting). We
first prove that over unordered trees being \ktame is the same as being
\kltame.

\begin{prop} \label{prop-kltame-ktame}
Let $L$ be an unordered unranked regular tree language, then for all integers $k$,
$L$ is \ktame iff there exists $l$ such that $L$ is \kltame. Furthermore, such
an $l$ can be computed from any automaton recognizing $L$.
\end{prop}

\begin{proof}
If there exists $l$ such that $L$ is \kltame then $L$ is obviously \ktame.
Suppose that $L$ is \ktame, and let $m$ be the counting threshold of the
minimal automaton $A$ recognizing $L$. We show that there exists $l'$ such
that $L$ is closed under $(k,l')$-guarded operations. This implies the result
as one can then take $l = max(m+1,l')$.

We need to show that $L$ is closed under $(k,l')$-guarded vertical swap, vertical
stutter, horizontal swap and horizontal transfer. The proof is similar to the
proof of \emph{Proposition 1} in~\cite{BS09}. We will use the following claim
which is proved in~\cite{BS09} using a simple pumping argument:

\begin{claim} \cite{BS09} \label{claim-kltype-ktype} For every tree automaton
  $A$ there is a number $l'$, computable from $A$, such that for every $k$ if a
  tree $t_1$ is $(k,l')$-equivalent to a tree $t_2$, then there are trees
  $t_1',t_2'$ with $t_1'$ and $t_2'$ $k$-equivalent such that $A$ reaches the
  same state on $t'_i$ as on $t_i$ for $i=1,2$.
\end{claim}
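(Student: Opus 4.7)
The plan is a direct simultaneous pumping argument based on the automaton's counting threshold $m$. The idea is to take $l'$ strictly larger than $m$, say $l' = m+1$. The main preliminary observation, verifiable by induction on subtree depth, is that once $l' \geq m$ the \kltype of a node fully determines the state reached by $A$ at that node: indeed, the bottom-up transition of $A$ at a node depends only on the multiset of children's states counted up to threshold $m \leq l'$, and this multiset is already fixed by $(k,l')$-equivalence of the children's subtrees.

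To build $t_1'$ and $t_2'$ from $t_1$ and $t_2$, I would proceed bottom-up, simultaneously normalizing corresponding nodes so that they end up with exactly the same \ktype. At a node $x$, group its children by \ktype{$-1$}. For a group of size strictly less than $l'$, the exact count is already recorded by the \kltype of $x$, so it automatically agrees between $t_1$ and $t_2$. For a group of size at least $l'$, pump by duplicating one representative of the group until it reaches a canonical target $N$ depending only on the shared \ltype{(k-1)} class; since the count was already $\geq l' > m$, this duplication does not alter the state reached by $A$ at $x$. The recursive nature of the \ktype is handled by carrying the induction inside children first: once subtrees have been canonicalized, two children sharing a \ltype{(k-1)} class become truly \ktype{$-1$} equivalent, so the grouping at the parent is well-defined.

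The main obstacle is the interaction between levels: pumping inside a child's subtree can in principle shift its \ktype{$-1$}, which would disturb the grouping used at the parent step. The clean way to side-step this is a strict bottom-up induction on $k$ in which one only ever adds duplicates of existing subtrees and never removes nodes, so that the \kltype of every node is preserved throughout the procedure and hence so is the state computed by $A$. At the end of the induction, $t_1'$ and $t_2'$ have identical multisets of children at every depth up to $k$, giving $t_1' \simeq_k t_2'$ as required, while $A$ reaches the same state on $t_i'$ as on $t_i$ by the preservation argument.
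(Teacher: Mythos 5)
The paper does not actually prove this claim itself---it is imported from [BS09] with the remark that it follows from ``a simple pumping argument''---and your bottom-up duplication strategy is indeed the right kind of pumping argument. However, there is a genuine gap in how you justify that the pumping preserves the state of $A$. Your ``preliminary observation'' that for $l'\geq m$ the $(k,l')$-type of a node determines the state reached by $A$ there is false: the induction on depth you invoke breaks at distance $k$ below the node, where the $(0,l')$-type is just a label and says nothing about the (arbitrarily deep) subtree hanging there. The correct statement, which the paper isolates as Claim~\ref{base-case-unranked}, holds only for subtrees of depth smaller than $k$. Since your final paragraph derives ``the state computed by $A$ is preserved'' from ``the $(k,l')$-type of every node is preserved,'' that derivation is unsound.

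The same confusion makes $l'=m+1$ too small. When you duplicate a representative of a group of $\geq l'$ children sharing a $(k-1,l')$-type, those children need not share a state of $A$: the group may consist of $m+1$ children spread over $m+1$ distinct states, each with multiplicity $1<m$, in which case any duplication changes a thresholded count at the parent and can change its state. The repair is the one the paper itself uses in the ALT construction: take $l' > m\cdot|A|$, so that by pigeonhole every group of size $\geq l'$ contains some state $q$ occurring more than $m$ times among the group, and duplicate only a child in such a state $q$; then the multiset of states counted up to threshold $m$ at the parent is unchanged, hence so is the state at the root. With that quantitative fix (and with the canonical target count $N$ chosen globally, e.g.\ as the maximal arity over both trees, so that the resulting $k$-type of a node is a function of its original $(k,l')$-type), your bottom-up normalization does yield $t_1'\simeq_k t_2'$ with the states preserved.
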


We use this claim to prove that $L$ is closed under horizontal transfer. Let
$l'$ be the number computed from $A$ by Claim~\ref{claim-kltype-ktype}. We
prove that $L$ is closed under $(k,l')$-guarded horizontal transfer.  Consider
a tree $t$ and three nodes $x,y,z$ of $t$ not related by the descendant
relationship and such that $\subtree{t}{x}=\subtree{t}{y}$ and such that $x,y$
and $z$ have the same \type{(k,l')}. Let $t'$ be the horizontal transfer of $t$ at
$x,y,z$.  Let $t_1 = \subtree{t}{x}$ and $t_2 = \subtree{t}{z}$ and $t_1'$,
$t_2'$ obtained from $t_1,t_2$ using Claim~\ref{claim-kltype-ktype}. Let $s$ be
the tree obtained from $t$ by replacing $\subtree{t}{x}$ and $\subtree{t}{y}$
with $t_1'$ and $\subtree{t}{z}$ with $t_2'$, and let $s'$ be the tree obtained
from $t'$ by replacing $\subtree{t'}{x}$ with $t_1'$ and $\subtree{t'}{y}$ and
$\subtree{t'}{z}$ with $t_2$.  From Claim~\ref{claim-kltype-ktype} it follows
that $t \in L$ iff $s \in L$ and $t' \in L$ iff $s' \in L$. Since $L$ is
\ktame, it is closed under $k$-guarded horizontal transfer, therefore we have
$s \in L$ iff $s' \in L$, it follows that $t \in L$ iff $t' \in L$.

The closure under horizontal swap is proved using the same claim. The proofs for
vertical swap and vertical stutter uses a claim similar to
Claim~\ref{claim-kltype-ktype} but for contexts: For every tree automaton $A$
there is a number $l$ computable from $A$ such that for every $k$ if the context
$C_1$ is $(k,l)$-equivalent to the context $C_2$ (by this we mean that their roots
have the same \kltype), then there are contexts $C'_1$,
$C'_2$ with $C'_1$ $k$-equivalent to $C'_2$ such that $C'_i$ induces the same
function on the states of $A$ as $C_i$ for $i=1,2$.
\end{proof}

From this lemma we know that a regular language over unranked unordered trees
is \tame iff it is \ktame for some $k$ iff it is \kltame for some $k,l$.
Moreover, as in the binary setting, if a regular language is \tame then it is
\kltame for some $k$ and $l$ computable from an automaton recognizing $L$.  The
bound on $k$ can be obtained by a straightforward adaptation of
Proposition~\ref{nec-condition-decision}. The bound on $l$ then follows from
Proposition~\ref{prop-kltame-ktame}. Hence we have:

\begin{prop}\label{nec-condition-decision-unranked}
  Let $L$ be a regular language and let $A$ be its minimal deterministic bottom-up
  tree automaton, we have $L$ is \tame iff $L$ is $(k_0,l_0)$-\tame
  for $k_0 = |A|^3+1$ and some $l_0$ computable from $A$.
\end{prop}

\subsection{\bf Decision of ALT}
We now turn to the proof of Theorem~\ref{theo-unranked}. We begin with the
proof for $ALT$ as both the decision procedure and its proof are obtained as in
the case of binary trees. Assuming tameness we obtain a bound on $\kappa$ and
$\lambda$ such that a language is in ALT iff it is in $(\kappa,\lambda)$-ALT.
Once $\kappa$ and $\lambda$ are known, it is easy do decide if a language is
$(\kappa,\lambda)$-ALT since the number of such languages is finite. The bounds
on $\kappa$ and $\lambda$ are obtained following the same proof structure as in
the binary cases, essentially replacing \ktame by \kltame, but with several
technical modifications. Therefore, we only sketch the proofs below and only
detail the new technical material. Our goal is to prove the following
result.

\begin{prop}\label{prop-nec-implies-LT-unranked}
  Assume $L$ is a \kltame regular tree language and let $A$ be its minimal
  automaton. Then $L$ is in ALT iff $L$ is in $(\kappa,\lambda)$-ALT where $\kappa$
  and $\lambda$ are computable from $k$, $l$ and $A$.
\end{prop}

Notice that for each $k,l$ the number of \kltypes is finite, let $\beta_{k,l}$
be this number.  Proposition~\ref{prop-nec-implies-LT-unranked} is now a simple
consequence of the following proposition.

\begin{prop}\label{lemma-pumping-unranked}
  Let $L$ be a \kltame regular tree language and let $A$ be the minimal
  automaton recognizing $L$. Set $\lambda = |A|l + 1$ and
  $\kappa=\beta_{k,l} + k + 1$.  Then for all $\kappa'>\kappa$, all $\lambda'>
  \lambda$ and any two trees $t,t'$ if $t \simeq_{(\kappa,\lambda)} t'$ then there
   exists two trees $T,T'$ with
\begin{enumerate}[\em(1)]
\item $t \in L$ iff $T \in L$
\item $t' \in L$ iff $T' \in L$
\item $T \simeq_{(\kappa',\lambda')} T'$.
\end{enumerate}
\end{prop}

Before proving Proposition~\ref{lemma-pumping-unranked} we adapt the extra
terminology we used in the proof of Proposition~\ref{lemma-pumping} to the
unranked setting. A non-empty context $C$ occurring in a tree $t$ is a
\emph{loop of \kltype $\tau$} if the \kltype of its root and the \kltype of
its port is $\tau$. A non-empty context $C$ occurring in a tree $t$ is a
$(k,l)$-loop if there is some \kltype $\tau$ such that $C$ is a loop of \kltype
$\tau$. Given a context $C$ we call the path from the root of $C$ to its port
the \emph{principal path of   $C$}. Finally, the result of the \emph{insertion}
of a $(k,l)$-loop $C$ at a node $x$ of a tree $t$ is a tree $T$ such that if
$t=D \cdot \subtree{t}{x}$ then $T=D\cdot C \cdot \subtree{t}{x}$. Typically
an insertion will occur only when the \kltype of $x$ is $\tau$ and $C$ is a loop
of \kltype $\tau$. In this case the \kltypes of the nodes initially from $t$ and
of the nodes of $C$ are unchanged by this operation.

\begin{proof}[Proof of Proposition~\ref{lemma-pumping-unranked}]

Suppose that $L$ is \kltame. As we did for the proof of the binary 
case we first prove two lemmas that are crucial for the construction of $T$
and $T'$. They show that subtrees can be replaced and contexts can be
inserted as long as this does not change the $(k+1,l)$-equivalence class
of the tree. They are direct adaptations of the corresponding lemmas for
the ranked setting: Lemmas~\ref{claim-transfer-branch} and~\ref{lemma-insert-loop}.
We start with subtrees.

\begin{lem}\label{claim-transfer-branch-unranked} 
Assume $L$ is \kltame. Let $t=Ds$ be a tree where $s$ is a subtree of $t$.
Let $s'$ be another tree such that the roots of $s$ and $s'$ have the same
\kltype.

If \lessblocks{s}{D}{(k+1,l)} and \lessblocks{s'}{D}{(k+1,l)} then $Ds\in L$
iff $Ds'\in L$.
\end{lem}

\begin{proof}[Proof sketch]

  As in the binary setting the proof is done by first proving a restricted
  version where $s'$ is actually another subtree of $t$. Before doing that, we
  state a new claim, specific to the unranked setting, that will be useful later
  in the induction bases of our proofs. In the binary setting, two
  trees that had the same \ktype at their root and were of depth smaller than
  $k$ were equal. This obviously does not extend to unranked trees and \kltypes.
  However it is simple to see that equality can be replaced by indistinguishability
  by the minimal tree automaton recognizing $L$.

  \begin{claim} \label{base-case-unranked} Let $A$ be a tree automaton and $m$ be
    its counting threshold. Let $t$ and $t'$ be two trees of depth smaller
    than $k$ and whose roots have the same \type{(k,m)}. Then $t$ and $t'$
    evaluate to the same state of $A$.
\end{claim}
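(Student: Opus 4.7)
The plan is to proceed by induction on $k$ (equivalently, on the common depth bound of the two trees). The base case $k=0$ is immediate: a tree of depth less than $0$ is a single leaf and the \type{(0,m)} of a leaf is just its label, so $t$ and $t'$ carry the same label and hence trigger the same leaf transition of $A$.

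For the inductive step, assume $k>0$ and that the claim holds for $k-1$. Let $t$ and $t'$ have depth less than $k$ and roots of the same \type{(k,m)}. By the inductive definition of \kltype, the roots of $t$ and $t'$ carry the same label $a$, and for every \type{(k-1,m)} $\sigma$, the numbers of children of the root of type $\sigma$ in $t$ and in $t'$ agree up to threshold $m$. Each such child is the root of a subtree of depth strictly less than $k-1$ and having \type{(k-1,m)} $\sigma$, so by the induction hypothesis any two children sharing a \type{(k-1,m)} evaluate to the same state of $A$. Consequently, for each state $q$ of $A$ there is a well-defined set of \types{(k-1,m)} $\sigma$ whose trees evaluate to $q$, and the numbers $n_q(t)$ and $n_q(t')$ of children of the root evaluating to $q$ are obtained by summing, over these $\sigma$, the corresponding children counts.

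It remains to verify that $n_q(t)$ and $n_q(t')$ agree up to threshold $m$. This reduces to the elementary arithmetic observation that for any nonnegative integers $a_1,\ldots,a_r$,
\[
\min\!\left(\sum_i a_i,\; m\right) \;=\; \min\!\left(\sum_i \min(a_i,m),\; m\right),
\]
which is immediate by distinguishing whether some $a_i$ already exceeds $m$ or not. Applied to the per-\type{(k-1,m)} counts, this identity shows that the truncated profiles at the roots of $t$ and $t'$ agree, and therefore the deterministic transition $\delta$ assigns the same state to both roots. I expect the main (though mild) obstacle to be precisely this last arithmetic reconciliation between the per-type thresholding built into \kltypes and the per-state thresholding used by the automaton $A$; once that identity is in hand the induction closes cleanly.
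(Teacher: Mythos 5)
Your proof is correct and follows essentially the same route as the paper: induction on $k$, matching root labels, applying the induction hypothesis to children sharing a \type{(k-1,m)}, and concluding that the per-state children counts agree up to threshold $m$. The only difference is that you spell out the arithmetic identity reconciling per-type and per-state thresholding, which the paper leaves implicit; that identity is correct and the induction closes as you describe.
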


\begin{proof}
This is done by induction on $k$. If $k = 0$, $t$ and $t'$ are leaves, it follows
from their \type{(0,m)} that $t = t'$.

Otherwise we know that $t$ and $t'$ have the same \type{(k,m)} at their root
therefore they have the same root label. Let $s$ and $s'$ be two trees that are
children of the root of $t$ or of $t'$ and have the same \type{(k-1,m)} at
their root. The depth of $s$ and $s'$ is smaller than $k-1$, therefore by
induction hypothesis $s$ and $s'$ evaluate to the same state of $A$.  Now,
because the roots of $t$ and $t'$ have the same \type{(k,m)}, for each
\type{(k-1,m)} $\tau$, they have the same number of children of type $\tau$ up to
threshold $m$. From the previous remark this implies that for each state $q$ of
$A$, they have the same number of children in state $q$ up to threshold $m$. It
follows from the definition of $A$ that $t$ and $t'$ evaluate to the same state
of $A$.
\end{proof}

We are now ready to state and prove the lemma in the restricted case.

\begin{claim} \label{claim-transfer-enhanced-unranked} Assume $L$ is \kltame.
    Let $t$  be a tree and let $x,y$ be two nodes of $t$ not related by the
    descendant relationship and with the same \type{(k,l)}. We write
    $s = \subtree{t}{x}$, $s' = \subtree{t}{y}$ and $C$ the context such that
    $t = Cs$. If $\lessblocks{s}{C}{(k+1,l)}$ then $Cs \in L$ iff
    $Cs' \in L$.
\end{claim}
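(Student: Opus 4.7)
The plan is to follow the structure of the ranked-case proof of Claim~\ref{claim-transfer-enhanced}, by induction on the depth of $s$, but with each ingredient replaced by its unranked counterpart. Throughout, let $m$ denote the counting threshold of the minimal automaton $A$ recognising $L$; we use crucially that $l>m$ in the definition of \kltame.

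For the base case, when $s$ has depth less than $k$, the binary argument concluded $s=s'$ from agreement of \ktypes, which is no longer available. Instead, Claim~\ref{base-case-unranked} applied to $s$ and $s'$ (whose roots share the same \kltype, hence the same \type{(k,m)}) shows that $s$ and $s'$ evaluate to the same state of $A$, so plugging either into $C$ produces the same final state, hence the same membership in $L$.

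For the inductive step, let $\tau$ be the \type{(k+1,l)} of $x$, pick a witness $z$ in $C$ of the same \type{(k+1,l)} (which exists because \lessblocks{s}{C}{(k+1,l)}), and set $s''=\subtree{t}{z}$. As in the binary proof, first handle the sub-case where $x$ and $z$ are unrelated by descent, then reduce the remaining sub-case ($x$ a strict descendant of $z$) to it using a preliminary $(k,l)$-guarded horizontal swap (when $y,z$ are unrelated) or vertical stutter (when $y$ is also a descendant of $z$). In the unrelated sub-case, the binary proof replaced the two children of $x$ with the two children of $s''$ via the induction hypothesis. Here, group the children of $x$ and of $z$ by their \kltype: for each such type $\sigma$, first use the induction hypothesis on pairs of siblings to turn all children of $x$ of type $\sigma$ into copies of a single representative $s^{\sigma}$, and symmetrically turn all children of $z$ of type $\sigma$ into copies of a single $s''^{\sigma}$; then apply $(k,l)$-guarded horizontal transfer to replace each $s^{\sigma}$ at $x$ by $s''^{\sigma}$. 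The subtree at $x$ now differs from $s''$ only in the exact number of children of each \kltype, and these numbers disagree only when both are at least $l>m$. An argument in the spirit of Claim~\ref{base-case-unranked} then applies: after normalisation the children of $x$ and the children of $z$ of a given type are in the same state of $A$ and occur in counts that agree up to threshold $m$, so the transition of $A$ at the two roots produces the same state, and consequently $Cs\in L$ iff $Cs''\in L$. From $Cs''$, the three further subcases on the position of $y$ relative to $z$ conclude using the $(k,l)$-guarded versions of vertical stutter, horizontal swap, and horizontal transfer, exactly as in the binary proof.

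The main obstacle is this count-matching issue: when both $x$ and $z$ have more than $l$ children of a given \kltype, no $(k,l)$-guarded operation can equalise the counts, because the tameness operations are all count-preserving or count-adjusting only by the vertical dimension. The resolution is to combine the $(k,l)$-guarded operations with a direct automaton-state argument that exploits $l>m$, and this is the only genuinely new ingredient compared with the ranked proof; once it is in place, the rest of the induction transfers from the binary case essentially verbatim.
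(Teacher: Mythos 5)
Your proposal follows the paper's proof essentially step for step: induction on the depth of $s$, base case via Claim~\ref{base-case-unranked} (using $l>m$ so that agreement of \kltypes yields agreement of \types{(k,m)} and hence equal automaton states), reduction of all positional configurations of $x,y,z$ to the case where $x$ and $z$ are unrelated, and resolution of the count mismatch by exploiting $l>m$ together with the threshold semantics of $A$ — which is indeed the only genuinely new ingredient over the binary case. The one place where you diverge is the mechanics of turning the subtree at $x$ into (something $L$-equivalent to) $s''$. The paper normalises only the children of $x$ within each \kltype class via the induction hypothesis, then \emph{deletes or duplicates} these now-identical children until the per-type counts match those of $z$ (harmless by the definition of $A$, since mismatched counts are both above $l>m$), and finally replaces each child of $x$ by the corresponding child of $z$ again via the induction hypothesis. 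You instead normalise both sides, convert the children of $x$ by $(k,l)$-guarded horizontal transfer, and close with a state-equality argument at the two roots. Your closing state argument is sound (it is the same computation as in Claim~\ref{base-case-unranked}), but the horizontal-transfer step is shakier than the paper's route: a transfer needs two \emph{identical} subtrees at unrelated nodes to act as the pair $(x,y)$ of the operation, so it cannot convert the last remaining copy of $s^{\sigma}$ once no other copy survives, and it cannot be used at all for types occurring exactly once; moreover, by also rewriting the children of $z$ you end at a tree that is not $Cs''$ and must argue separately that the modifications at $z$ can be undone in the changed surrounding tree. All of this is avoided by doing what the binary proof does and what you already do for the within-type normalisation: replace each child of $x$ by a child of $z$ of the same \kltype directly via the induction hypothesis, which needs no duplicate copies. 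With that substitution your argument matches the paper's.
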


\begin{proof}[Proof sketch] This proof only differs from its binary tree counterpart
  Claim~\ref{claim-transfer-enhanced} in the details of the induction step.  It
  is done by induction on the depth of $s$.

  Assume first that $s$ is of depth less than $k$. Since $x$ and $y$ have the same
  \type{(k,l)} and since $l \geq m$ it follows from Claim~\ref{base-case-unranked} that
  $s$ and $s'$ evaluate to the same state on the automaton $A$ recognizing
  $L$. Hence we can replace $s$ with $s'$ without affecting membership in $L$.

  Assume now that $s$ is of depth greater than $k$.
 
  Let $\tau$ be the \type{(k+1,l)} of $x$. We write $s_1,...,s_{n}$ for the children
  of $s$ and $a$ the label of its root. Since $\lessblocks{s}{C}{(k+1,l)}$,
  there exists a node $z$ in $C$ of type $\tau$. We write $s'' = \subtree{t}{z}$.

  We now do a case analysis depending on the descendant relationships between
  $x$, $y$ and $z$. As for binary trees, all cases reduce to the case when
  $x$ and $z$ are not related by the descendant relationship by simple
  \kltame \!\!ness operations. Therefore we only consider this case here.

  Assume that $x$ and $z$ are not related by the descendant relationship.  We
  show only that $Cs \in L$ iff $Cs'' \in L$. The proof that $Cs' \in L$ iff
  $Cs'' \in L$ is then done exactly as for binary trees.

  Since $x$ and $z$ are of same \type{(k+1,l)} $\tau$, the roots of $s'$ and
  $s''$ have the same label $a$. Let $s_1'',\ldots,s_{n'}''$ be the children of
  the root of $s''$. As in the binary case we want to replace the trees
  $s_1,\ldots,s_n$ with these children by induction since the depth of the trees
  $s_1,\ldots,s_n$ is smaller than the depth of $s$. Unfortunately for each
  \kltype $\tau_i$, the number of trees whose root has type $\tau_i$ among the
  children of $x$ and among the children of $z$ might not be the same. However
  we know that in this case both numbers are greater than $l$. We overcome this
  difficulty in two steps, first we modify the children of $x$, without affecting
  membership in $L$, so that if $s_i$ and $s_j$ have the same \kltype then $s_i
  = s_j$, then we use the fact that $l > m$ in order to delete or duplicate
  children of $x$ until for each \kltype $\tau_i$ the number of trees of root
  of type $\tau_i$ among the children of $x$ and among the children of $z$ is
  the same. By definition of $A$, this does not affect membership in $L$.
  Finally we replace the $s_i$ by the $s''_i$ by induction as in the binary
  case.

  For the first step notice that any of the $s_i$ is by definition of depth smaller
  than $s$ therefore by the induction hypothesis we can replace it with any of its
  siblings having the same \kltype at its root without affecting membership in $L$.
\end{proof}

We now turn to the proof of Lemma~\ref{claim-transfer-branch-unranked} in its
general statement. The proof is done by induction on the depth of $s'$. The
idea is to replace $s$ with $s'$ node by node.

  Assume first that $s'$ is of depth smaller than $k$. Then because the
  \kltypes of the roots of $s$ and $s'$ are the same we are in the hypothesis
  of Claim~\ref{base-case-unranked} and it follows that $s$ and $s'$ evaluate
  to the same state of $A$. The result follows.

  Assume now that $s'$ is of depth greater than $k$. 

  Let $x$ be the node of $t$ corresponding to the root of $s$. Let $\tau$ be
  the \type{(k+1,l)} of the root of $s'$. In the binary tree case we used a sequence
  of tame operations to reduce the problem to the case where $x$ has
  \type{(k+1,l)} $\tau$. Using the same operations we can also reduce the
  problem to this case in the unranked setting. Then we use the induction
  hypothesis to replace the children of $x$ by the children of the root of
  $s'$. As in the proof of Claim~\ref{claim-transfer-enhanced-unranked}, the
  problem is that the number of children might not match but this is solved
  exactly as in the proof of Claim~\ref{claim-transfer-enhanced-unranked}.
\end{proof}

As in the binary tree case, we now prove a result similar to
Lemma~\ref{claim-transfer-enhanced-unranked} but for $(k,l)$-loops.

\begin{lem}\label{lemma-insert-loop-unranked}
  Assume $L$ is \kltame. Let $t$ be a tree and $x$ a node of $t$ of \kltype
  $\tau$. Let $t'$ be another tree such that \sameblocks{t}{t'}{(k+1,l)} and
  $C$ be a $(k,l)$-loop of type $\tau$ in $t'$. Consider the tree $T$
  constructed from $t$ by inserting a copy of $C$ at $x$. Then $t \in L$ iff $T \in L$.
\end{lem}

\begin{proof}[Proof sketch] 
  The proof is done using the same structure as Lemma~\ref{lemma-insert-loop}
  for the binary case. First we use the \kltame property of $L$ to show that we
  can insert a $(k,l)$-loop $C'$ at $x$ in $t$ such that the principal path of
  $C$ is the same as the principal path of $C'$. By this we mean that there is
  a bijection from the principal path of $C'$ to the principal path of $C$ that
  preserves the child relation and \types{(k+1,l)}. In a second step we replace
  one by one the subtrees hanging from the principal path of $C'$ with the
  corresponding subtrees in $C$.

  Let $T'$ be the tree resulting from inserting $C'$ at position $x$. We do not
  detail the first step as it is done using exactly the same sequence of tame
  operations we used for this step in the proof of
  Lemma~\ref{lemma-insert-loop}. This yields: $t \in L$ iff $T' \in L$. We turn
  to the second step showing that $T' \in L$ iff $T \in L$. 

  By construction of $T'$ we have \lessblocks{C'}{t}{(k+1,l)}. Consider now a
  node $x'_i$ in the principal path of $C'$ and $x_i$ the corresponding node in
  $C$. As in the binary tree case we replace the subtrees branching out of
  the principal path of $C'$ with the corresponding trees branching out of the
  principal path of $C$ using Lemma~\ref{claim-transfer-branch-unranked}. As in
  the previous proof, the problem is that the numbers of children might not
  match. This is solved exactly as in the proof of
  Lemma~\ref{claim-transfer-branch-unranked}.
\end{proof}%lemma-insert-loop

We now turn to the construction of $T$ and $T'$ and prove
Proposition~\ref{lemma-pumping-unranked}.

The construction is similar to the one we did in the binary tree case.
We insert $(k,l)$-loops in $t$ and $t'$ using
Lemma~\ref{lemma-insert-loop-unranked} for obtaining bigger types. However
inserting loops only affects the depth of the types.  Therefore we need to do
extra work in order to also increase the width of the types.

Assuming $t \simeq_{(k,l)} t'$ we first construct two intermediate trees $T_1$ and $T_1'$ that have
the following properties:
\begin{iteMize}{$\bullet$}
\item $t \in L\ $ ~iff~ $\ T_1 \in L$
\item $t' \in L\ $ iff $\ T_1' \in L$
\item \sameblocks{T_1}{T'_1}{(\kappa',\lambda)}
\end{iteMize}

This construction is the same as in the binary tree setting so we only briefly
describe it. Let $B = \{\tau_{0},...,\tau_{n}\}$ be the set of \kltypes $\tau$
such that there is a loop of \kltype $\tau$ in $t$ or in $t'$. For each $\tau
\in B$ we fix a context $C_\tau$ as follows. Because $\tau \in B$ there is a
context $C$ in $T_1$ or $T_1'$ that is a loop of \kltype $\tau$. For each $\tau
\in B$, we fix arbitrarily such a $C$ and set $C_\tau$ as $\underbrace{C \cdot
  \ldots \cdot C}_{\kappa'}$, $\kappa'$ concatenations of the context $C$.
Notice that the path from the root of $C_\tau$ to its port is then bigger than
$\kappa'$.

$T_1$ is constructed from $t$ as follows (the construction of $T_1'$ from
$t'$ is done similarly). The tree $T_1$ is constructed by simultaneously
inserting, for all $\tau \in B$, a copy of the context $C_\tau$ at all nodes of
$t$ of type $\tau$. By Lemma~\ref{lemma-insert-loop-unranked} it follows that
$t \in L$ iff $T_1 \in L$ and $t' \in L$ iff $T_1' \in L$. Using the same proof
as that of Proposition~\ref{lemma-pumping} for the binary tree setting, we obtain
\sameblocks{T_1}{T_1'}{(\kappa',\lambda)}.

We now describe the construction of $T$ from $T_1$, the construction of $T'$
from $T'_1$ is done similarly.  It will be convenient for us to view the nodes
of $T_1$ as the union of the nodes of $t$ plus some extra nodes coming from the
loops that were inserted.

Let $n$ be the maximum arity of a node of $T_1$ or of $T_1'$. We duplicate
subtrees in $T_1$ and $T_1'$ as follows. Let $x$ be a node of $T_1$, that is
not in a loop we inserted when constructing $T_1$ from $t$.  For each
\type{(\kappa'-1,\lambda)} $\tau$, if $x$ has more than $\lambda$ children of
type $\tau$ we duplicate one of the corresponding subtrees until $x$ has exactly
$n$ children of type $\tau$ in total. This is possible without affecting membership in
$L$ because $\lambda > m|A|$. Indeed, because $\lambda > m|A|$, for at least one
state $q$ of $A$, there exists more than $m$ subtrees of $x$ of type $\tau$ for which
$A$ assigns that state $q$ at their root, and by definition of $A$ any of these
subtrees can be duplicated without affecting membership in $L$. The tree $T$
is constructed from $T_1$ by repeating this operation for any node $x$
of $T_1$ coming from $t$. By construction we have $T_1 \in L$ iff $T \in L$ and therefore $t
\in L$ iff $T \in L$. The same construction starting from $T'_1$ yields a tree
$T'$ such that $t' \in L$ iff $T' \in L$.

We now show that \sameblocks{T}{T'}{\kappa'}, it follows that
\sameblocks{T}{T'}{(\kappa',\lambda')} and this concludes the proof.

\begin{lem}\label{claim-sameblock-unranked}
\sameblocks{T}{T'}{\kappa'}
\end{lem}

\begin{proof}
We need to show that \lessblocks{T}{T'}{\kappa'}, \lessblocks{T'}{T}{\kappa'}
and that the roots of $T$ and $T'$ have the same \type{\kappa'}.

Recall that in $T_1$ we distinguished between two kinds of nodes, those coming
from $t$ and those coming from the loops that were inserted during the
construction of $T_1$ from $t$. We make the same distinction in $T$ by assuming
that a node generated after a duplication gets the same kind as its original copy.

Recall the definition of $B$ and of $C_\tau$ for $\tau \in B$ that was used for
defining $T_1$ and $T'_1$ from $t$ and $t'$.

As for the binary tree case it suffices to show that for any node of $T$ coming
from $t$ there is a node of $T'$ coming from $t'$ and having the same
\type{\kappa'}. Hence the result follows from the claim below that is an
adaptation of Claim~\ref{claim-identify-types}.

\begin{claim} Take two nodes $x$ in $t$ and $x'$ in $t'$, such that $x$ and
 $x'$ have the same \type{(\kappa,\lambda)}. Let $z$ and $z'$ be the
 corresponding nodes in $T$ and $T'$. Then $z$ and $z'$ have the same \type{\kappa'}.
\end{claim}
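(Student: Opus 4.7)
I would adapt the argument of Claim~\ref{claim-identify-types} to the unranked setting, combining a vertical part (depth extension via inserted loops) with a horizontal part (matching children multiplicities via the duplication step).

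First I would establish the vertical structure. Let $\nu$ be the common \type{(\kappa,\lambda)} of $x$ and $x'$. On any branch of $\nu$ of length $\kappa$, the nodes at distance at least $k$ from the leaf of the branch have their \kltype entirely determined by $\nu$; since $\kappa = \beta_{k,l} + k + 1$ there are more than $\beta_{k,l}$ such nodes, so a pigeonhole argument yields two of them sharing a \kltype $\tau$. This $\tau$ therefore lies in $B$, and because $x \simeq_{(\kappa,\lambda)} x'$ the same $\tau$ occurs at the matching relative position on the primed side; hence during the construction of $T_1$ and $T_1'$ the identical context $C_\tau$ is inserted at matching relative positions below $z$ and $z'$. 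Since each $C_\tau$ has principal path of length greater than $\kappa'$, this pins down an identical skeleton of principal paths inside the depth-$\kappa'$ neighborhood of $z$ in $T$ and of $z'$ in $T'$.

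Next I would build an isomorphism between the depth-$\kappa'$ neighborhoods of $z$ and $z'$ level by level. A node $y$ at depth $d < \kappa'$ from $z$ is either inside a copy of some $C_\tau$ inserted during construction, in which case the matching node $y'$ is forced (the same $C_\tau$ was inserted on the primed side by the vertical step), or $y$ originates from $t$, possibly as one of the duplicates introduced in the last step. In the second case the children of $y$ partition into \type{(\kappa'-1,\lambda)} classes. Before duplication, the multiplicities at $y$ and at its counterpart on the primed side agree up to threshold $\lambda$, because by the preceding part of the main proof we have $T_1 \simeq_{(\kappa',\lambda)} T_1'$; the duplication step then raises every multiplicity that exceeded $\lambda$ to exactly $n$ on both sides, since the rule triggering duplication depends only on the \type{(\kappa'-1,\lambda)} of the children, and those types are symmetric between $T_1$ and $T_1'$. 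Hence at the threshold $n$, which exceeds $\lambda'$, the multiplicities of the children's \type{(\kappa'-1,\lambda)} classes match exactly.

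The main obstacle I foresee is keeping the two kinds of nodes synchronized during the recursive matching: the inserted-loop parts must be matched verbatim, while the from-$t$ parts must be matched through the combination of $(\kappa,\lambda)$-equivalence and the bounded-arity duplication. A particular subtlety is that a child of a from-$t$ node may itself have \kltype in $B$, in which case another $C_{\tau'}$ is inserted above it; the vertical step guarantees that the same insertion happened at the matching child on the primed side, so the induction on depth continues cleanly. Putting together the vertical skeleton of principal paths with the horizontal matching at each level yields an isomorphism between the depth-$\kappa'$ labeled unordered neighborhoods of $z$ and $z'$, which is exactly the claim.
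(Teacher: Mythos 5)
Your plan follows the paper's proof essentially step for step: first the pigeonhole/loop-insertion argument (the unranked analogue of Claim~\ref{claim-identify-types}) giving the corresponding nodes of $T_1$ and $T_1'$ the same $(\kappa',\lambda)$-type, then an induction on depth in which inserted copies of $C_\tau$ are matched verbatim and the children of from-$t$ nodes are matched class by class, with exact multiplicities agreeing because sub-threshold counts are forced by the type and super-threshold counts are both normalized to $n$ by the duplication step. This is the same decomposition and the same key observations as the paper's argument, so no further comparison is needed.
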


\begin{proof}
  Let $x$ and $x'$ be two nodes of $t$ and $t'$ with the same
  \type{(\kappa,\lambda)}.  Let $x_1$ and $x'_1$ be the corresponding
  nodes in $T_1$ and $T'_1$.  The same proof as
  Claim~\ref{claim-identify-types} for the binary tree case shows that $x_1$
  and $x'_1$ have the same \type{(\kappa',\lambda)}.

  Let $y$ be a child of $x$. Let $y_1$ be the node corresponding to $y$ in
  $T_1$. Notice now that the \type{(\kappa',\lambda)} of $y_1$ in $T_1$ is
  completely determined by the \type{(\kappa-1,\lambda)} $\nu$ of $y$ in $t$.
  Indeed, by choice of $\kappa$, during the construction of $T_1$, a loop
  of type $\tau \in B$ will be inserted between $y$ and any descendant of $y$
  at distance at most $\beta_{(k,l)}-1$ from $y$. As $\kappa > \beta_{(k,l)} +
  k$, the relative positions below $y$ where such a $C_\tau$ is inserted can be
  read from $\nu$. As the depth of any $C_\tau$ is greater than
  $\kappa'$, from $\nu$ we can compute exactly the descendants of $y_1$ in
  $T_1$ up to depth $\kappa'$. Hence $\nu$ determines the
  \type{(\kappa',\lambda)} of $y_1$.

  It follows that two children of $x_1$ or of $x'_1$ have the same
  \types{(\kappa',\lambda)} iff they had the same
  \types{(\kappa-1,\lambda)} in $t$ or in $t'$.

  We now construct an isomorphism between the \type{\kappa'} of $z$ and the one
  of $z'$. Let $d$ be the maximal distance between $z$ and a node that is a
  descendant of $z$ where a loop was inserted during the construction of $T$ from
  $t$. We construct our isomorphism by induction on $d$.

  If $d=0$ then the \kltype of $z$ is in $B$ and as $z$ and $z'$ have the same
  \type{(\kappa',\lambda)} with $\kappa'>k$, the \kltype of $z'$ is the same as the one of $z'$.
  Therefore the subtrees rooted at $z$ and $z'$ are equal up to depth $\kappa'$
  as they all start with a copy of $C_\tau$ and we are done.

  Otherwise, as $z$ and $z'$ have the same \type{(\kappa',\lambda)} their roots
  must have the same labels.  Consider now a \type{(\kappa'-1,\lambda)}
  $\mu$. By construction of $T$ and $T'$, $z$ and $z'$ must have the same
  number of occurrences of children of type $\mu$. Indeed from the type these
  numbers must match if one of them is smaller than $\lambda$ and by
  construction they are equal to $n$ otherwise. Hence we have a bijection from
  the children of $z$ of type $\mu$ and the children of $z'$ of type $\mu$.
  From the text above we know that the \type{(\kappa',\lambda)} of these nodes
  is determined by the \type{(\kappa-1,\lambda)} of their copy in $t$ or in
  $t'$. Because $x$ and $x'$ have the same \type{(\kappa,\lambda)}, the corresponding
  \types{(\kappa-1,\lambda)} are all equal and hence all the nodes of type
  $\mu$ actually have the same \type{(\kappa',\lambda)}. By induction they are
  isomorphic up to depth $\kappa'$ and we are done.
\end{proof}%outer claim
From Claim~\ref{claim-sameblock-unranked} the lemma follows as in the proof of
Lemma~\ref{claim-sameblock} for binary trees.
\end{proof}%lemma
This concludes the proof of Proposition~\ref{lemma-pumping-unranked}.
\end{proof}%proposition

\subsection{\bf Decision of ILT}
In the idempotent case we can completely characterize ILT using closure
properties. We show that membership in ILT corresponds to \tameness together
with an extra closure property denoted \emph{horizontal stutter} reflecting the
idempotent behavior.  A tree language $L$ is closed under horizontal stutter
iff for any tree $t$ and any node $x$ of $t$, replacing \subtree{t}{x} with two
copies of \subtree{t}{x} does not affect membership in $L$.
Theorem~\ref{theo-unranked} for ILT is a consequence of the following theorem.

\begin{thm}\label{theorem-ILT}
 A regular unordered tree language is in ILT iff it is \tame and closed under
 horizontal stutter.
\end{thm}

\begin{proof}
  It is simple to see that \tameness and closure under horizontal stutter are
  necessary conditions. We prove that they are sufficient. Take a regular tree
  language $L$ and suppose that $L$ is \tame and closed under horizontal
  stutter. Then there exists $k$ and $l$ such that $L$ is \kltame. We show that
  if $\sameblocks{t}{t'}{(k+1,1)}$ then $t \in L$ iff $t'\in L$. It follows
  that $L$ is in ILT. We first show a simple lemma stating that if
  two trees contain the same \types{(k+1,1)}, then we can pump them without
  affecting membership in $L$ into two trees that contain the same
  \types{(k+1,l)}.

\begin{lem} \label{lemma-pump-idem}
Let $L$ closed under horizontal stutter and let $s$ and $s'$ two trees such that
$\sameblocks{s}{s'}{(k+1,1)}$. Then there exist two trees $S$ and $S'$ such that:

\begin{iteMize}{$\bullet$}
\item $s \in L\ $ ~iff~  $\ S \in L$.
\item $s' \in L\ $ ~iff~ $\ S' \in L$.
\item $\sameblocks{S}{S'}{(k+1,l)}$
\end{iteMize}
\end{lem}

\begin{proof}
  $S$ is constructed from $s$ via a bottom-up procedure. Let $x$ be a node of
  $s$. For each subtree rooted at a child of $x$, we duplicate it $l$ times
  using horizontal stutter. This does not affect membership in $L$.
  After performing this for all nodes $x$ of $s$ we obtain a tree $S$ 
  with the desired properties.
\end{proof}

Let $T$ and $T'$ be constructed from $t$ and $t'$ using
Lemma~\ref{lemma-pump-idem}.  Let $T_1,\ldots,T_n$ the children of the root of
$T$ and $T'_1,\ldots,T'_{n'}$ the children of the root of $T'$. Let $T''$ be
the tree whose root is the same as $T$ and $T'$ and whose children is the
sequence of trees $T_1,\ldots,T_n,T'_1,\ldots,T'_{n'}$. We show that $T'' \in L$ iff
$T \in L$ and $T'' \in L$ iff $T' \in L$. It will follow that $T \in L$ iff $T'
\in L$ and by Lemma~\ref{lemma-pump-idem} that $t \in L$ iff $t'\in L$ which
ends the proof.

To show that $T'' \in L$ iff $T \in L$ we use horizontal stutter and
Lemma~\ref{claim-transfer-branch-unranked}.  As the roots of $T$ and $T'$ have
the same \type{(k+1,l)}, for each $T'_i$, there exists a tree $T_j$ such that its root
has the same \kltype as $T'_i$. Fix such a pair $(i,j)$. Let $S$ be the tree
obtained by duplicating $T_j$ in $T$. By closure under horizontal stutter $T
\in L$ iff $S \in L$. But now $S = DT_j$ for some context $D$ such that
\lessblocks{T}{D}{(k+1,l)}. Altogether we have that: the roots of $T'_i$ and
$T_j$ have the same \kltype (by choice if $i$ and $j$),
\lessblocks{T'_i}{D}{(k+1,l)} (as \lessblocks{T'_i}{T'}{(k+1,l)} and $T
\simeq_{(k+1,l)} T'$) and \lessblocks{T_j}{D}{(k+1,l)} (as $T_j$ is part of $T$
hence of $D$). We can therefore apply Lemma~\ref{claim-transfer-branch-unranked} and  $DT'_i\in L$ iff
$DT_j \in L$. 

Repeating this argument for all $i$ eventually yields the tree $T''$. This
proves that $T'' \in L$ iff $T \in L$. By symmetry we also have $T'' \in L$ iff
$T' \in L$ which concludes the proof.
\end{proof}

%%% Local Variables: 
%%% mode: latex
%%% TeX-master: "main"
%%% ispell-local-dictionary: "american"
%%% End:  

\section{Tameness is not sufficient} \label{section-nonsuff}

Over strings \tameness characterizes exactly LT as vertical swap and vertical
stutter are exactly the extensions to trees of the known equations for LT
(recall Section~\ref{section-notation}). Over trees this is no longer the case.
In this section we provide an example of a language that is tame but not $LT$.
For simplifying the presentation we assume that nodes may have between 0 and
three children; this can easily be turned into a binary tree language. All
trees in our language $L$ have the same structure consisting of a root of label
{\bf a} from which exactly three sequences of nodes with only one child
(strings) are attached. The trees in $L$ have therefore exactly three leaves,
and those must have three distinct labels among $\{${\bf h$_1$}, {\bf h$_2$},
{\bf h$_3$}$\}$.  The labels of two of the branches, not including the root and
the leaf, must form a sequence in the language {\bf b}$^*${\bf c}{\bf
  d}$^*$. The third branch must form a sequence in the language {\bf b}$^*${\bf
  c'}{\bf d}$^*$. We assume that {\bf b}, {\bf c}, {\bf c'} and {\bf d} are
distinct labels. Note that the language does not specify which leaf label among
$\{${\bf h$_1$}, {\bf h$_2$}, {\bf h$_3$}$\}$ is attached to the branch
containing {\bf c'}.

The reader can verify that $L$ is $1$-tame. We show that $L$ is not in LT. For
all integer $k$, the two trees $t$ and $t'$ depicted below are such that $t \in
L$, $t' \notin L$, while \sameblocks{t}{t'}{k}.

\tikzstyle{level 1}=[level distance=1cm, sibling distance=0.8cm]

\tikzstyle{bag} = [text width=4em, text centered]
\tikzstyle{bagc} = [red!80,fill=blue!80,text width=1em, text centered]

\begin{center}
\begin{tikzpicture}[grow=down, sloped]

\node[bag] at (0,0) {$a$}
child {
	node[bag] {$b^{k}$}
	child {
		node[bag] {$c$}
		child {
			node[bag] {$d^{k}$}
			child {
				node[bag] {$h_{1}$}
			}
		}
	}
}
child {
	node[bag] {$b^{k}$}
	child {
		node[bag] {$c$}
		child {
			node[bag] {$d^{k}$}
			child {
				node[bag] {$h_{2}$}
			}
		}
	}
}
child {
	node[bag] {$b^{k}$}
	child {
		node[bag] {$c'$}
		child {
			node[bag] {$d^{k}$}
			child {
				node[bag] {$h_{3}$}
			}
		}
	}
};

\node[bag] at (0,-4.5) {$t \in L$};
\node[bag] at (2.5,-2) {\sameblockequiv{k}};
\node[bag] at (5,-4.5) {$t' \notin L$};

\node[bag] at (5,0) {$a$}
child {
	node[bag] {$b^{k}$}
	child {
		node[bag] {$c$}
		child {
			node[bag] {$d^{k}$}
			child {
				node[bag] {$h_{1}$}
			}
		}
	}
}
child {
	node[bag] {$b^{k}$}
	child {
		node[bag] {$c'$}
		child {
			node[bag] {$d^{k}$}
			child {
				node[bag] {$h_{2}$}
			}
		}
	}
}
child {
	node[bag] {$b^{k}$}
	child {
		node[bag] {$c'$}
		child {
			node[bag] {$d^{k}$}
			child {
				node[bag] {$h_{3}$}
			}
		}
	}
};

\end{tikzpicture}

\end{center}

%%% Local Variables: 
%%% mode: latex
%%% TeX-master: "main"
%%% End: 

\section{Discussion}

We have shown a decidable characterization for the class of locally testable
regular tree languages both for ranked trees and unranked unordered trees.

\paragraph{\bf Complexity}\label{section-complexity}
The decision procedure for deciding membership in LT as described in this paper
requires a time which is a tower of several exponentials in the size of the
deterministic minimal automaton recognizing $L$. This is most likely not
optimal. In comparison, over strings, membership in LT can be performed in
polynomial time~\cite{Pin05}. Essentially our procedure requires
two steps.  The first step shows that if a regular language $L$ is in LT then
it is \testable{\kappa} for some $\kappa$ computable from the minimal
deterministic automaton $A$ recognizing $L$. The $\kappa$ obtained in
Proposition~\ref{prop-nec-implies-LT} is doubly exponential in the size of $A$.
In comparison, over strings, this $\kappa$ can be shown to be polynomial. For
trees we did not manage to get a smaller $\kappa$ but we have no example where
even one exponential would be necessary.

Our second step tests whether $L$ is \testable{\kappa} once $\kappa$ is fixed.
This was easy to do using a brute force algorithm requiring several
exponentials in $\kappa$. It is likely that this can be optimized but we didn't
investigate this direction.

However for unranked unordered trees we have seen in Theorem~\ref{theorem-ILT}
that in the case of ILT it is enough to test for \tameness.
The naive procedure for deciding tameness is exponential in the size of
$A$. But the techniques presented in~\cite{BS09}
for the case of LTT, easily extend to the closure properties of tameness, and
provide an algorithm running in time polynomial in the size of $A$. Hence
membership in ILT can be tested in time polynomial in the size of the minimal
deterministic bottom-up tree automaton recognizing the language.

\paragraph{\bf Logical characterization}
There is a logical characterization of languages that are locally testable. It
corresponds to those languages definable by formulas containing the temporal
predicates {\bf G} and {\bf X} where {\bf G} stands for ``everywhere in the
tree'' while {\bf X} stands for ``child''. In the binary tree case, we also
require two predicates distinguishing the left child from the right child. In
the unranked unordered setting the logic above is closed under bisimulation
and therefore corresponds to ILT. This shows that in a sense ILT is the natural
extension of LT to the unranked setting.

\paragraph{\bf Open problem}
%In the case of unranked \emph{ordered} trees, we believe that \tameness
%together with a property that essentially says that the horizontal navigation is
%in LT should provide a characterization for an intuitive notion of LT. Note
%that in this setting it is no longer clear whether \tameness is decidable or
%not. We leave the case of ordered trees as an open problem.
It would be interesting to obtain a different characterization of LT based on a
finite number of conditions such as the ones characterizing tameness. This
would be a more satisfying result and would most likely provide a more efficient
algorithm for deciding LT.

\bibliographystyle{alpha}
\bibliography{main.bib}

\newcommand{\etalchar}[1]{$^{#1}$}
\begin{thebibliography}{CGJ{\etalchar{+}}07}

\bibitem[Boj07a]{Boj07}
M.~Boja{\'n}czyk.
\newblock A new algorithm for testing if a regular language is locally
  threshold testable.
\newblock {\em Information Processing Letter}, 104(3):91--94, 2007.

\bibitem[Boj07b]{Boj-utl}
M.~Boja{\'n}czyk.
\newblock Two-way unary temporal logic over trees.
\newblock In {\em IEEE Symposium on Logic in Computer Science (LICS)}, pages
  121--130, 2007.

\bibitem[BP89]{BP89}
D.~Beauquier and J-E. Pin.
\newblock Factors of words.
\newblock In {\em Intl. Coll. on Automata, Languages and Programming (ICALP)},
  pages 63--79, 1989.

\bibitem[BS73]{BS73}
J.~A. Brzozowski and I.~Simon.
\newblock Characterizations of locally testable languages.
\newblock {\em Discrete Mathematics}, 4:243--271, 1973.

\bibitem[BS08]{lucicalp08}
M.~Boja{\'n}czyk and L.~Segoufin.
\newblock Tree languages defined in first-order logic with one quantifier
  alternation.
\newblock In {\em Intl. Coll. on Automata, Languages and Programming (ICALP)},
  2008.

\bibitem[BS10]{BS09}
M.~Benedikt and L.~Segoufin.
\newblock {Regular languages definable in {FO} and {FOmod}}.
\newblock {\em ACM Trans. Of Computational Logic}, 11(1), 2010.

\bibitem[BSS08]{luclics08}
M.~Boja{\'n}czyk, L.~Segoufin, and H.~Straubing.
\newblock Piecewise testable tree languages.
\newblock In {\em IEEE Symposium on Logic in Computer Science (LICS)}, 2008.

\bibitem[BW06]{EFEX}
M.~Boja\'nczyk and I.~Walukiewicz.
\newblock Characterizing {EF} and {EX} tree logics.
\newblock {\em Theoretical Computer Science}, 358(255-272), 2006.

\bibitem[BW07]{forestalgebra}
M.~Boja\'nczyk and I.~Walukiewicz.
\newblock Forest algebras.
\newblock In {\em Automata and Logic: History and Perspectives}, pages 107 --
  132. Amsterdam University Press, 2007.

\bibitem[CGJ{\etalchar{+}}07]{tata}
Hubert Comon, Max Dauchet~Remy Gilleron, Florent Jacquemard, Christof Löding,
  Denis Lugiez, Sophie Tison, and Marc Tommasi.
\newblock Tree automata techniques and applications.
\newblock Available electronically at \url{http://tata.gforge.inria.fr/}, 2007.

\bibitem[EW05]{preclones}
Z.~Esik and P.~Weil.
\newblock Algebraic characterization of regular tree languages.
\newblock {\em Theoretical Computer Science}, 340:291--321, 2005.

\bibitem[McN74]{McN74}
R.~McNaughton.
\newblock Algebraic decision procedures for local testability.
\newblock {\em Mathematical System Theory}, 8(1):60--76, 1974.

\bibitem[Pin05]{Pin05}
J-E. Pin.
\newblock The expressive power of existential first order sentences of
  {B}üchi’s sequential $\mu$ calculus.
\newblock {\em Discrete Mathematics}, 291:155--174, 2005.

\bibitem[Pla08]{place-csl08}
T.~Place.
\newblock Characterization of logics over ranked tree languages.
\newblock In {\em {C}onference on {C}omputer {S}cience {L}ogic ({CSL})}, pages
  401--415, 2008.

\bibitem[Str85]{Str85}
H.~Straubing.
\newblock Finite semigroup varieties of the form {V$*$D}.
\newblock {\em J. of Pure and Applied Algebra}, 36:53--94, 1985.

\bibitem[Til87]{Tilson}
B.~Tilson.
\newblock Categories as algebra: an essential ingredient in the theory of
  monoids.
\newblock {\em J. of Pure and Applied Algebra}, 48:83--198, 1987.

\bibitem[TW85]{TW85}
D.~Th\'erien and A.~Weiss.
\newblock Graph congruences and wreath products.
\newblock {\em J. Pure and Applied Algebra}, 36:205--215, 1985.

\bibitem[Wil96]{Wil96}
T.~Wilke.
\newblock An algebraic characterization of frontier testable tree languages.
\newblock {\em Theoretical Computer Science}, 154(1):85--106, 1996.

\end{thebibliography}

\end{document}